\documentclass[review]{siamart251104}
\usepackage{graphicx}
\usepackage{pdfpages}
\usepackage{tabto}
\usepackage{transparent}
\usepackage{algorithm2e}
\usepackage{chngcntr}
\usepackage{xr}
\usepackage{floatrow}
\newfloatcommand{capbtabbox}{table}[][0.48\textwidth]

\usepackage{pdfpages} 
\usepackage{pgffor} 

\makeatletter
\AtBeginDocument{\let\LS@rot\@undefined}
\makeatother

\def\supplementfilename{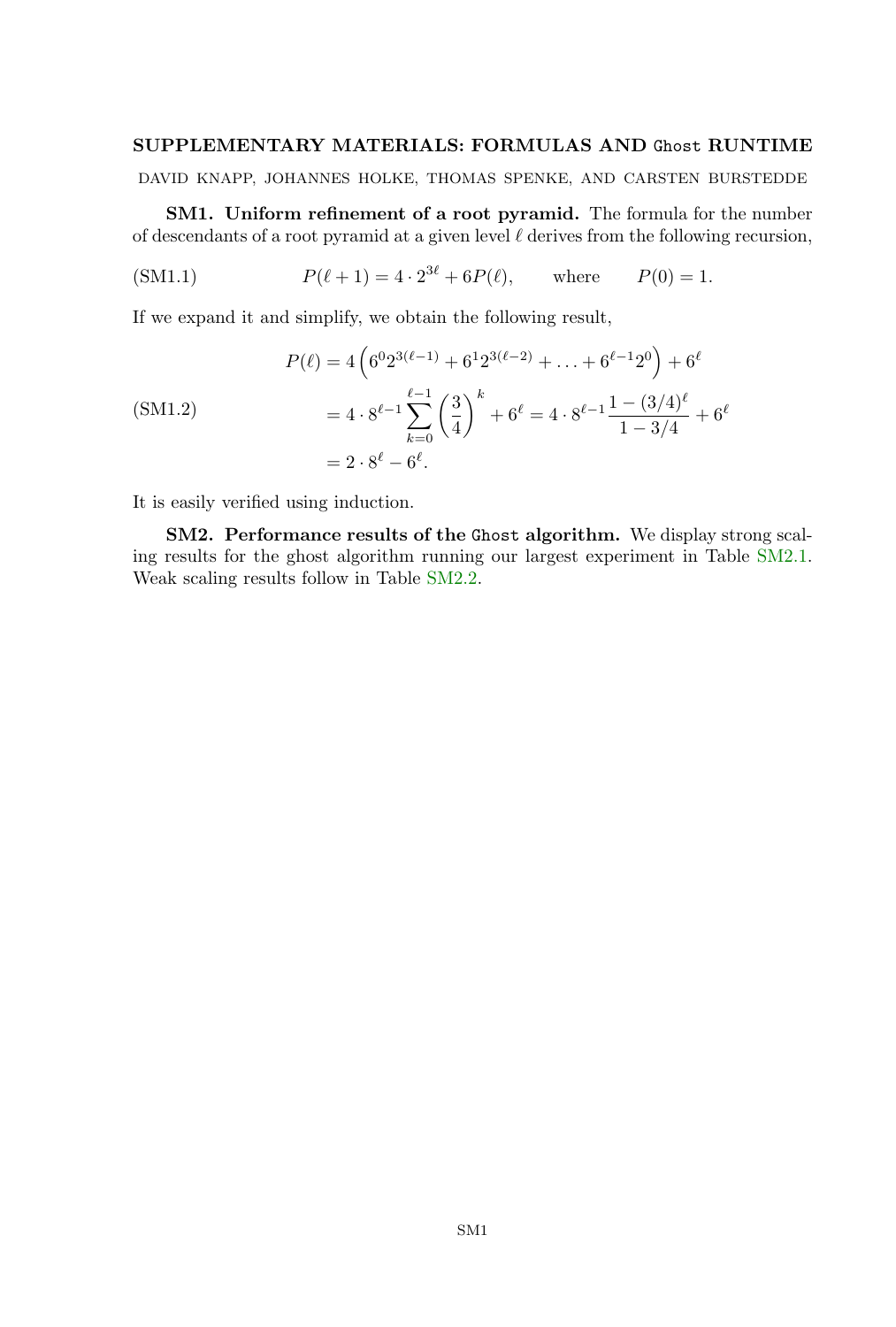}

\pdfximage{\supplementfilename}
\def\numbersupplementpages{\the\pdflastximagepages}

\newif\ifarXiv
\arXivtrue 

\usepackage{amsfonts}
\usepackage{graphicx}
\usepackage{epstopdf}
\usepackage{tikz}

\usepackage{amssymb}
\usepackage{booktabs}
\usepackage{cite} 
\usepackage[strict,style=italian]{csquotes} 
\usepackage[T1]{fontenc} 
\usepackage{mathrsfs} 
\usepackage{multirow}	  
\usepackage[dvipsnames]{xcolor} 	
\usepackage{enumerate}

\usepackage{subcaption} 
\usepackage{longtable}
\usepackage{tabularx}
\usepackage{verbatim}
\usepackage{xspace}
\usepackage{sidecap}

\newcommand{\TheTitle}{A Morton-type space-filling curve for\\
                       pyramid subdivision and hybrid adaptive mesh refinement}
\newcommand{\TheShortTitle}{A Morton-type SFC for pyramid subdivision and hybrid AMR}
\newcommand{\TheAuthors}{D.\ Knapp, J.\ Holke, T.\ Spenke, C.\ Burstedde and L. Dreyer}
\newcommand{\TheAuthorsHeader}{\TheAuthors}

\headers{\TheShortTitle}{\TheAuthorsHeader}

\title{{\TheTitle}
}
\author{
  David Knapp\thanks{Universität zu Köln, Germany,
  and Institute of Software Technology, German Aerospace Center (DLR), Cologne, Germany (\email{david.knapp@dlr.de})}
  \and
  Johannes Albrecht Holke%
  \thanks{Institute of Software Technology, DLR, Cologne, Germany
  (\email{johannes.holke@dlr.de})}
  \and
  Thomas Spenke%
  \thanks{Institute of Software Technology, DLR, Cologne, Germany
  (\email{thomas.spenke@dlr.de})}
  \and
  Carsten Burstedde\thanks{Institut f{\"u}r Numerische Simulation (INS)
    and Hausdorff Center for Mathematics (HCM),
    Rheinische Friedrich-Wilhelms-Universit{\"a}t Bonn, Germany}
  \and
  Lukas Dreyer\thanks{Institute of Software Technology, DLR, Cologne, Germany and Leibniz University Hannover, Institute of Applied Mathematics (IfAM), Leibniz University Hannover, Germany and Cluster of
Excellence PhoenixD (Photonics, Optics, and Engineering - Innovation Across Disciplines), Leibniz
University Hannover, Germany
}
}

\theoremstyle{plain}
\newtheorem{defi}[theorem]{Definition}

\makeatletter
\AtBeginDocument{
\def\refstepcounter@optarg[#1]#2{%
  \cref@old@refstepcounter{#2}%
  \cref@constructprefix{#2}{\cref@result}%
  \@ifundefined{cref@#1@alias}%
    {\def\@tempa{#1}}%
    {\def\@tempa{\csname cref@#1@alias\endcsname}}%
  \protected@edef\cref@currentlabel{%
    [\@tempa][\arabic{#2}][\cref@result]%
    \csname p@#2\endcsname\csname the#2\endcsname}}}      
\makeatother

\numberwithin{equation}{section}

\ifpdf
  \DeclareGraphicsExtensions{.eps,.pdf,.png,.jpg}
\else
  \DeclareGraphicsExtensions{.eps}
\fi

\newcommand{\braceset}[1]{\lbrace #1 \rbrace}

\newcommand{\bitwand}{\;\&\;} 

\newcommand{\pforest}{\texttt{p4est}\xspace} 
\newcommand{\tetcode}{\texttt{t8code}\xspace} 

\newcommand{\ghostn}{\texttt{Ghost}\xspace} 

\newcommand{\tetnew}{\texttt{New}\xspace}
\newcommand{\tetadapt}{\texttt{Adapt}\xspace}
\newcommand{\tetpartition}{\texttt{Partition}\xspace}

\newcommand{\tetghost}{\ghostn}

\newif\ifANMan
\ANMantrue 
\newif\ifDEBUG
\ifANMan
\newcommand{\ANM}[1]{\footnote{ANMERKUNG: #1}}  
\newcommand{\ANMtext}[1]{\footnotetext{ANMERKUNG: #1}}
\newcommand{\ANMmarkn}[1]{\footnotemark[#1]}           
\newcommand{\ANMtextn}[2]{\footnotetext[#1]{ANMERKUNG: #2}}
\else
\newcommand{\ANM}[1]{}  
\newcommand{\ANMtext}[1]{}
\newcommand{\ANMmarkn}[1]{}          
\newcommand{\ANMtextn}[2]{}
\fi

\ifDEBUG
\newcommand{\figlab}[1]{\caption*{\color{red} #1}\label{fig:#1}}
\else
\newcommand{\figlab}[1]{\label{fig:#1}}
\fi
\newcommand{\figref}[1]{Figure~\ref{fig:#1}}
\newcommand{\tablab}[1]{\label{tab:#1}}
\newcommand{\tabref}[1]{Table~\ref{tab:#1}}
\newcommand{\eqnlab}[1]{\label{Equ:#1}}
\newcommand{\eqnref}[1]{\eqref{Equ:#1}}
\newcommand{\seclab}[1]{\label{Sec:#1}}
\newcommand{\secref}[1]{Section~\ref{Sec:#1}}

\definecolor{mygreen}{rgb}{0,0.7,0}

\DeclareMathOperator{\ptype}{type}				

\newcommand{\plen}{\mathrm{len}}				
\newcommand{\pquest}{\;?\;}                             

\newcommand{\fC}{\mathfrak{C}}
\newcommand{\fK}{\mathfrak{K}}

\newcommand{\floors}[1]{\left\lfloor #1 \right\rfloor}

\newcolumntype{L}[1]{>{\raggedright\arraybackslash}p{#1}}
\newcolumntype{C}[1]{>{\centering\arraybackslash}p{#1}}
\newcolumntype{R}[1]{>{\raggedleft\arraybackslash}p{#1}}

\SetKwComment{Comment}{$\triangleright\,$}{}

\SetCommentSty{myCommentStyle}
\newlength{\algoCommentLength}
\newcommand{\mygets}[1]{=}

\newcommand{\myAlgoComment}[1]{\Comment*[r]{\makebox[\algoCommentLength]{#1\hfill}}}
\newcommand{\myAlgoCode}[1]{#1}
\newcommand{\myAlgoLine}[2]{\myAlgoCode{#2} \myAlgoComment{#1}}

\newcommand{\switchShapeLevel}{\text{min\_tet\_level}}

\newcommand{\level}{\text{level}}
\newcommand{\coord}{\text{coord}}
\newcommand{\type}{\text{type}}

\newcommand{\myAlgoBlockHeader}[1]{\textit{\textbf{Note:}} \textit{#1}}

\counterwithin{algocf}{section}

\newcommand{\interleaving}{\dot{\bot}}

\newcommand{\ourSubSubsection}[1]{\subsubsection*{#1}}

\newcommand{\enumerateCases}[1]{
\begin{enumerate}[(a)] #1 \end{enumerate}}
\newcommand{\enumerateSteps}[1]{
\begin{enumerate}[1.] #1 \end{enumerate}}

\ifpdf
\hypersetup{%
  pdftitle={\TheTitle},
  pdfauthor={\TheAuthors}%
}
\fi

\graphicspath{{pictures/pyramids/}
}%

\begin{document}

\maketitle

\begin{abstract}%
The forest-of-refinement-trees approach allows for dynamic adaptive mesh refinement (AMR) at negligible cost. While originally developed for quadrilateral and hexahedral elements, previous work established the theory and algorithms for unstructured meshes of simplicial and prismatic elements. To harness the full potential of tree-based AMR for three-dimensional mixed-element meshes, this paper introduces the pyramid as a new functional element type;
its primary purpose is to
connect tetrahedral and hexahedral elements without hanging edges.
We present a well-defined space-filling curve (SFC) for the pyramid and
detail how the unique challenges on the element and forest level associated
with the pyramidal refinement are resolved. We propose the necessary
functional design and generalize the fundamental global parallel algorithms
for refinement, coarsening, partitioning, and face ghost exchange to fully
support this new element. Our demonstrations confirm the efficiency and
scalability of this complete, hybrid-element dynamic AMR framework.
\end{abstract}

\begin{keywords}
  Adaptive mesh refinement,
  parallel algorithms,
  forest of octrees,
  hybrid mesh,
  pyramid
\end{keywords}

\begin{AMS}
  65M50, 
  68W10, 
  65Y05, 
  65D18  
\end{AMS}


\section{Introduction}
\seclab{intro}
Numerical simulations are indispensable for analyzing complex physical systems.
The fidelity of their results depends critically on the spatial resolution
of the simulation domain. Adaptive meshes are widely used for efficient
resolution of local features, which reduces computational cost
\cite{FehlingBangerth, DavydovGerasimovPelteretSteinman,
ranocha2022adaptive, kirby2019wind}. The prevailing strategy for
distributing the computational mesh onto multiple processes is
graph-oriented partitioning \cite{GraphPartitioningShephard,
GraphPartitioningShephard2, WALSHAW1997102, GraphPartOpenFoam}. It typically
relies on third-party libraries \cite {KarypisKumar95,
KarypisSchloegelKumar, pellegrini:hal-00770422, Zoltan} to compute
near-optimal partitions that minimize inter-process communication. While
these algorithms are well established for static unstructured meshes, where
the partition is computed only once, their computational overhead becomes a
significant factor in dynamic scenarios and for extreme process counts.

A robust alternative to graph-based partitioning is dynamic, hierarchical
adaptation based on a forest of refinement trees. This approach exploits the
locality-preserving properties of \textit{space-filling curves} (SFCs)
\cite{Peano90, Hilbert91, Lebesgue04, Sierpinski12, Samet06} to linearize
the multidimensional domain. While classically restricted to quadrilateral
or hexahedral elements \cite{StewartEdwards04}, this method allows for
rapid, logic-based repartitioning without the overhead of global graph
algorithms. Consequently, it is highly effective for massively parallel,
dynamic simulations, as demonstrated by several scalable implementations
\cite{GriebelZumbusch99, TuOHallaronGhattas05,
BursteddeWilcoxGhattas11, WeinzierlMehl11}, some of which are now integrated
into general-purpose parallel-computing libraries
\cite{BangerthBursteddeHeisterEtAl11, IsaacKnepley15,
BursteddeKirilinKlofkorn25}.

Most of these libraries support the usage of a fixed element shape, such as
pure hexahedral or pure tetrahedral meshes. Both choices come with their own
strengths and weaknesses; for example, hexahedral elements are beneficial
to various classes of solvers, while tetrahedra offer more flexibility to
cover complex geometries.
Combining both shapes can be advantageous but requires bridging elements,
such as prisms or pyramids, to prevent hanging edges \cite{hybridMesh,
hybridMeshHangingEdge}.
By themselves, these shapes do not offer further advantages, but they allow
for conforming transitions wherever tetrahedral and hexahedral mesh elements
meet.

In this paper, we present a generalization of Morton-type space-filling
curves \cite{Morton66} to pyramids, a crucial development that enables fully
general, tree-based hybrid adaptive mesh refinement (AMR) \cite{hybridMesh,
ItoShihSoni}. In previous works, we have established the underlying theory,
parallel algorithms, and scalable demonstrations for hypercubical and
simplicial elements up to three dimensions \cite{BursteddeHolke16},
and we have conducted initial research on suitable space-filling
curves for prisms \cite{Knapp17} and pyramids \cite{Knapp20}.
To fully support the pyramid primitive in practice, this work introduces a
set of new per-element algorithms, such as computing the child, parent, and
face neighbor, along with extensions to global mesh modification and
interrogation algorithms like computing the bounds of a uniformly
partitioned mesh.  We present a reference implementation within the
open-source adaptive-meshing software library
\tetcode \cite{tetcodeweb23}.

We propose a novel refinement strategy that partitions a pyramid into six pyramidal and four tetrahedral children. The decisive advantage of this decomposition is its commutativity with cubic refinement. Analogous to the tetrahedral Morton index \cite{Holke18}, this property allows us to uniquely identify any pyramid in the forest solely by its enclosing sub-cube and a local orientation integer.
Unlike standard element shapes that refine self-similarly, this strategy results in heterogeneous children (pyramids and tetrahedra). While this geometric irregularity poses a significant challenge for low-level index arithmetic, our solution overcomes this barrier, providing the essential component to enable fully general, tree-based hybrid adaptive mesh refinement.
%


\subsection{Summary}
\label{Subsection}
Our proposed refinement strategy introduces heterogeneity in the
descendants, as children of a pyramid are either tetrahedra or pyramids.
Furthermore, the refinement pattern leads to a structure that is neither
strictly self-similar nor uniformly balanced, with child numbers of eight
(for tetrahedra) or ten (for pyramids).
This geometric complexity requires sophisticated, novel approaches for the
atomic element algorithms.
For example, generating a sibling element, which is a straightforward
operation for hypercubes and manageable for simplices, becomes non-trivial
due to the mixed nature of the children.
We propose functional designs for these essential algorithms and describe
how the fundamental global parallel routines for refinement, coarsening, and
gathering ghost elements are successfully generalized to robustly include
pyramids.

%
%
%
%
%

The design of the space-filling curve itself is achieved through a
relatively benign generalization, following the successful methodology
developed previously for tetrahedra.
Crucially, primitives like six tetrahedra, two prisms, or the combination of
two pyramids and two tetrahedra (as shown in \figref{pyra_types}) may be
arranged to tile a reference cube.
We complement the standard cubic Morton SFC by introducing a type index that
uniquely counts through the multiple, similar incarnations of a given
primitive within that reference cube.

The partitioning algorithm follows without additional efforts, since it
relies entirely on the SFC.
The adaptive refinement and coarsening, however, are complicated significantly
by several novel aspects:
\begin{enumerate}
\item
  Each pyramid refines into six pyramids and four tetrahedra; each of the
  pyramids may have one of two orientations in relation to the reference cube.
  This makes the generation of an initial uniform mesh non-trivial, since it
  contains a mix of equal-depth pyramidal and tetrahedral elements.
\item
  Formally generating the parent within pyramid refinement trees is quite
  involved compared to the much simpler algorithms for hexahedra,
  tetrahedra, and prisms, since it is not obvious whether the parent of a
  tetrahedron is a tetrahedron or a pyramid.
\item
  The face neighbor of a pyramid may be taken across a triangular or a
  quadrilateral face, as opposed to standard element shapes in \tetcode like
  tetrahedra and hexahedra, for which the face always has the same shape.
  This requires additional lookup tables and branching.
\end{enumerate}

\secref{tetcode} briefly reviews the forest-of-octrees structure implemented in \tetcode. We define and encode the space-filling curve for pyramids in \secref{pyramidSFC}. \secref{lowlevel} develops the necessary per-element algorithms. In \secref{highlevel}, we describe the global parallel algorithms for mesh manipulation, including the generation of the initial hybrid forest. We conclude in \secref{Sec:resultcube} and \secref{Sec:resultplane} with demonstrations of the efficiency and scalability of this new, fully hybrid approach to dynamic AMR.

\section{\tetcode and its parallel view of the forest}
\seclab{tetcode}

The open-source software
\tetcode \cite{tetcodeweb23} is our reference implementation of
parallel, hybrid, dynamic, adaptive mesh refinement via a suite of
dedicated parallel algorithms.
The word \textit{parallel} refers to the distributed storage of mesh data
and metadata as well as its modification via concurrent algorithms
that communicate over MPI \cite{Forum15}.
With \textit{hybrid} we refer to mixing elements of different primitive
shapes in the same mesh, and \textit{dynamic} refers to generating and
modifying a parallel mesh in-core during a simulation at negligible cost
compared to the overall runtime.
Presently, and not barring future extension, it encodes topological
neighbors across mesh faces.
Thus, \tetcode addresses a subset of the full face, edge, and
corner connectivity supported by the \pforest software \cite{Burstedde25a}.
\pforest, on the other hand, is focused strictly on quadrilateral and
hexahedral elements and offers no route towards the hybrid refinement
native to \tetcode.

\tetcode \cite{BursteddeHolke16}, like \pforest
\cite{BursteddeWilcoxGhattas11} and deal.II
\cite{BangerthHartmannKanschat07} before, uses an unstructured mesh as a
starting point for its operations, the so-called \textit{coarse mesh}, which
is usually given as the output of a mesh generator.
Every element in this coarse mesh corresponds to one tree in the overall
data structure, known as the \emph{forest} made of these trees.
The coarse mesh of \tetcode encodes the shape of its root elements and the
connection to a neighboring tree for each of its faces not on the domain
boundary.
Adaptively refining the trees, we create a distributed set of \textit{leaves}
that become the elements of the refined mesh.
The space-filling curve induces a consecutive index over the leaf elements,
which enables simple and efficient load balancing across multiple
processes.
Storing only the leaves of the tree, decidedly ignoring all interior tree
nodes, has been pioneered with the dendro code \cite{SundarSampathBiros08}
and is often called \emph{linear} octree storage.
This concept enables supremely efficient data handling and evolves from the
first generation of distributed octree codes \cite{TuOHallaronGhattas05}.


As illustrated in Figure \ref{fig:forest_of_tree}, each process holds its local elements and may also store its neighboring elements on other processes, which we call \textit{ghost} elements. In 3D, \tetcode currently supports hexahedral, tetrahedral and prism elements using space filling curves. However, a suitable space-filling curve for pyramids did not exist so far. We address this gap by introducing a novel curve in this work.

\begin{figure}
    \centering
    \includegraphics[width=0.8\linewidth]{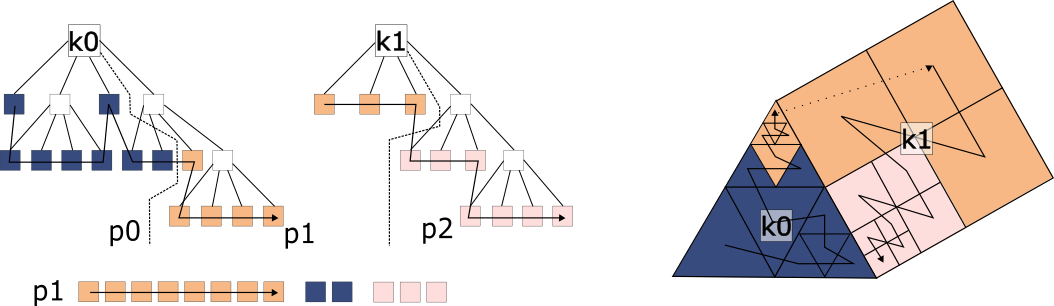}
    \caption{The forest-of-trees approach used in \tetcode. The trees $k_0$ and $k_1$ are uniformly partitioned over three processes $p_0$, $p_1$, $p_2$. Each process holds its local elements and the ghost elements. The black arrow indicates the order of the elements by the space-filling curve index. The array of elements shows the elements stored by $p_1$.}
    \label{fig:forest_of_tree}
\end{figure}

\section{A pyramidal refinement and space-filling curve}
\seclab{pyramidSFC}

In this section we introduce the mathematical concept of a hierarchical
pyramidal refinement and define the space-filling curve it induces.
The anchor of our construction is the root pyramid, for which we specify a
refinement scheme that defines its children. Following the construction of
children of other elements, i.e., cutting the corners of the element, we get
pyramidal children in the corners of a pyramid and tetrahedra in between;
see \figref{pyra_refinment_type_6} and \figref{pyra_refinment_type_7}.

Upon further refinement, the tetrahedra resulting from the pyramidal decomposition will refine self-similarly, meaning they yield only tetrahedral children. For this, we can directly reuse published concepts, definitions, and algorithms based on Bey's refinement scheme \cite{BursteddeHolke16, Bey92}. A reference cube is split into six similar tetrahedra such that any descendant tetrahedron is a scaled version of one of the original six. When the Morton space-filling curve \cite{Morton66, TropfHerzog81} is applied to the refinement of the initial cube, each descending cube partitions into six tetrahedral descendants at the same refinement level, or depth.
This ensures that cubic and tetrahedral refinement commute. This commutativity allows us to uniquely identify each tetrahedral descendant using a concise index tuple: its level, the coordinates of the surrounding cube descendant's lower-left-back corner (\textit{anchor coordinate}), and its type $\in \braceset{0, \ldots, 5}$. The type encodes the specific orientation of the element within its enclosing sub-cube.

For the pyramidal children, a novel refinement strategy is required.
Existing solutions for pyramidal subdivision often result in an undesirably
large number of element types or introduce entirely new element shapes
\cite{KALLINDERIS20055019, KHAWAJA20001231,
KasmaiThompsonLukeJankunMachiraju}, which significantly increases the
complexity of adaptive mesh refinement algorithms. Our design goal is a
refinement that avoids producing degenerate mesh elements, maintains a
minimal number of element types, and commutes with cubic refinement. A naive
extension of Bey's refinement to pyramids would typically involve dividing
the root cube into three similar pyramids rotated 120 degree along a long
diagonal. However, recursively applying this subdivision scheme quickly
increases the number of distinct child shapes, leading to significantly more
complex indexing and low-level algorithms.


To fulfill our requirements we propose a novel decomposition of the reference cube. We place two pyramidal elements within the cube, where one is a $180$-degree rotation of the other, leaving two tetrahedral gaps.
This arrangement guarantees that every pyramidal descendant is a scaled version of one of the original two. Furthermore, this decomposition allows us to leverage and reuse existing, established algorithms for the tetrahedral children of a pyramid, significantly reducing the overall algorithmic complexity.

Thanks to this strategy, pyramidal and cubic refinement commute, and we can define each pyramid
by its surrounding descendant cube and its type $\in \braceset{6, 7}$.
With this convention, the shape of any descendant of the root pyramid is
implicit in its type.
In analogy to tetrahedral elements, we therefore uniquely define a pyramid
by its \textit{anchor coordinate} $\vec{c}_0$ (see \figref{pyra_types}), its
level, and its type.


\begin{figure}
	\centering
	\begin{subfigure}[t]{.32\textwidth}
		\centering
		\resizebox{!}{0.55\textwidth}{
\begingroup%
  \makeatletter%
  \providecommand\color[2][]{%
    \errmessage{(Inkscape) Color is used for the text in Inkscape, but the package 'color.sty' is not loaded}%
    \renewcommand\color[2][]{}%
  }%
  \providecommand\transparent[1]{%
    \errmessage{(Inkscape) Transparency is used (non-zero) for the text in Inkscape, but the package 'transparent.sty' is not loaded}%
    \renewcommand\transparent[1]{}%
  }%
  \providecommand\rotatebox[2]{#2}%
  \newcommand*\fsize{\dimexpr\f@size pt\relax}%
  \newcommand*\lineheight[1]{\fontsize{80}{44}\selectfont}%
  \ifx\svgwidth\undefined%
    \setlength{\unitlength}{487.26187134bp}%
    \ifx\svgscale\undefined%
      \relax%
    \else%
      \setlength{\unitlength}{\unitlength * \real{\svgscale}}%
    \fi%
  \else%
    \setlength{\unitlength}{\svgwidth}%
  \fi%
  \global\let\svgwidth\undefined%
  \global\let\svgscale\undefined%
  \makeatother%
  \begin{picture}(1,0.61323149)%
    \lineheight{1}%
    \setlength\tabcolsep{0pt}%
    \put(0,0){\includegraphics[width=\unitlength,page=1]{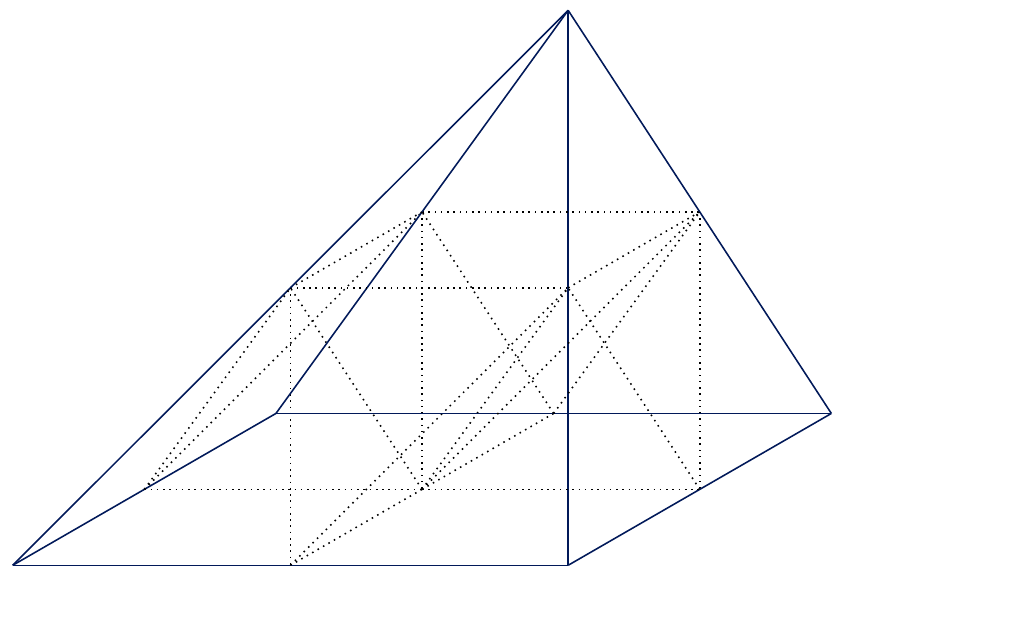}}%
    \put(0.25972155,0.16112353){\color[rgb]{0,0,0}\makebox(0,0)[lt]{\lineheight{1.25}\smash{\begin{tabular}[t]{l}$\vec{c}_0$\end{tabular}}}}%
    \put(0.01047428,0.01394795){\color[rgb]{0,0,0}\makebox(0,0)[lt]{\lineheight{1.25}\smash{\begin{tabular}[t]{l}$\vec{x}_1$\end{tabular}}}}%
    \put(0.53996435,0.01298664){\color[rgb]{0,0,0}\makebox(0,0)[lt]{\lineheight{1.25}\smash{\begin{tabular}[t]{l}$\vec{x}_3$\end{tabular}}}}%
    \put(0.80990461,0.15642242){\color[rgb]{0,0,0}\makebox(0,0)[lt]{\lineheight{1.25}\smash{\begin{tabular}[t]{l}$\vec{x}_2$\end{tabular}}}}%
    \put(0.58989212,0.57066105){\color[rgb]{0,0,0}\makebox(0,0)[lt]{\lineheight{1.25}\smash{\begin{tabular}[t]{l}$\vec{x}_4$\end{tabular}}}}%
  \end{picture}%
\endgroup%
}
		\caption{The refinement of a pyramid of type 6.}
		\figlab{pyra_refinment_type_6}
	\end{subfigure}
	\begin{subfigure}[t]{.32\textwidth}
		\centering
		\resizebox{!}{0.51\textwidth}{
\begingroup%
  \makeatletter%
  \providecommand\color[2][]{%
    \errmessage{(Inkscape) Color is used for the text in Inkscape, but the package 'color.sty' is not loaded}%
    \renewcommand\color[2][]{}%
  }%
  \providecommand\transparent[1]{%
    \errmessage{(Inkscape) Transparency is used (non-zero) for the text in Inkscape, but the package 'transparent.sty' is not loaded}%
    \renewcommand\transparent[1]{}%
  }%
  \providecommand\rotatebox[2]{#2}%
  \newcommand*\fsize{\dimexpr\f@size pt\relax}%
  \newcommand*\lineheight[1]{\fontsize{40}{34}\selectfont}%
  \ifx\svgwidth\undefined%
    \setlength{\unitlength}{532.84785461bp}%
    \ifx\svgscale\undefined%
      \relax%
    \else%
      \setlength{\unitlength}{\unitlength * \real{\svgscale}}%
    \fi%
  \else%
    \setlength{\unitlength}{\svgwidth}%
  \fi%
  \global\let\svgwidth\undefined%
  \global\let\svgscale\undefined%
  \makeatother%
  \begin{picture}(1,0.51747151)%
    \lineheight{1}%
    \setlength\tabcolsep{0pt}%
    \put(0,0){\includegraphics[width=\unitlength,page=1]{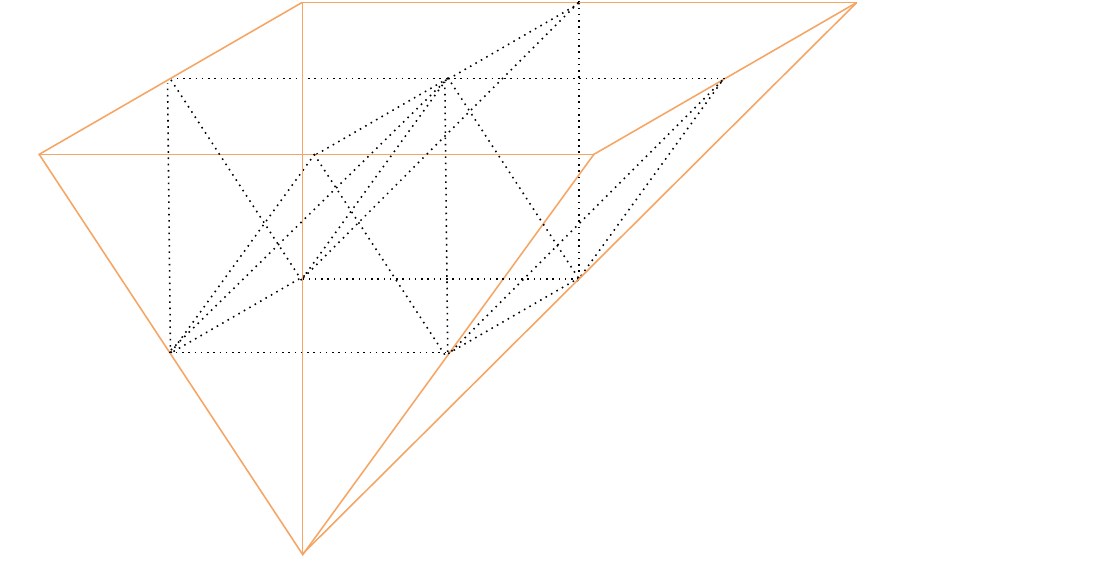}}%
    \put(0.28529624,0.00646584){\color[rgb]{0,0,0}\makebox(0,0)[lt]{\lineheight{1.25}\smash{\begin{tabular}[t]{l}$\vec{c}_0$\end{tabular}}}}%
    \put(-0.00255668,0.33668661){\color[rgb]{0,0,0}\makebox(0,0)[lt]{\lineheight{1.25}\smash{\begin{tabular}[t]{l}$\vec{x}_5$\end{tabular}}}}%
    \put(0.52971884,0.32804798){\color[rgb]{0,0,0}\makebox(0,0)[lt]{\lineheight{1.25}\smash{\begin{tabular}[t]{l}$\vec{x}_7$\end{tabular}}}}%
    \put(0.16600356,0.482287){\color[rgb]{0,0,0}\makebox(0,0)[lt]{\lineheight{1.25}\smash{\begin{tabular}[t]{l}$\vec{x}_4$\end{tabular}}}}%
    \put(0.7808045,0.47668309){\color[rgb]{0,0,0}\makebox(0,0)[lt]{\lineheight{1.25}\smash{\begin{tabular}[t]{l}$\vec{x}_6$\end{tabular}}}}%
  \end{picture}%
\endgroup%
}
		\caption{The refinement of a pyramid of type 7.}
		\figlab{pyra_refinment_type_7}
	\end{subfigure}
	\begin{subfigure}[t]{.32\textwidth}
		\centering
		\resizebox{!}{0.8\textwidth}{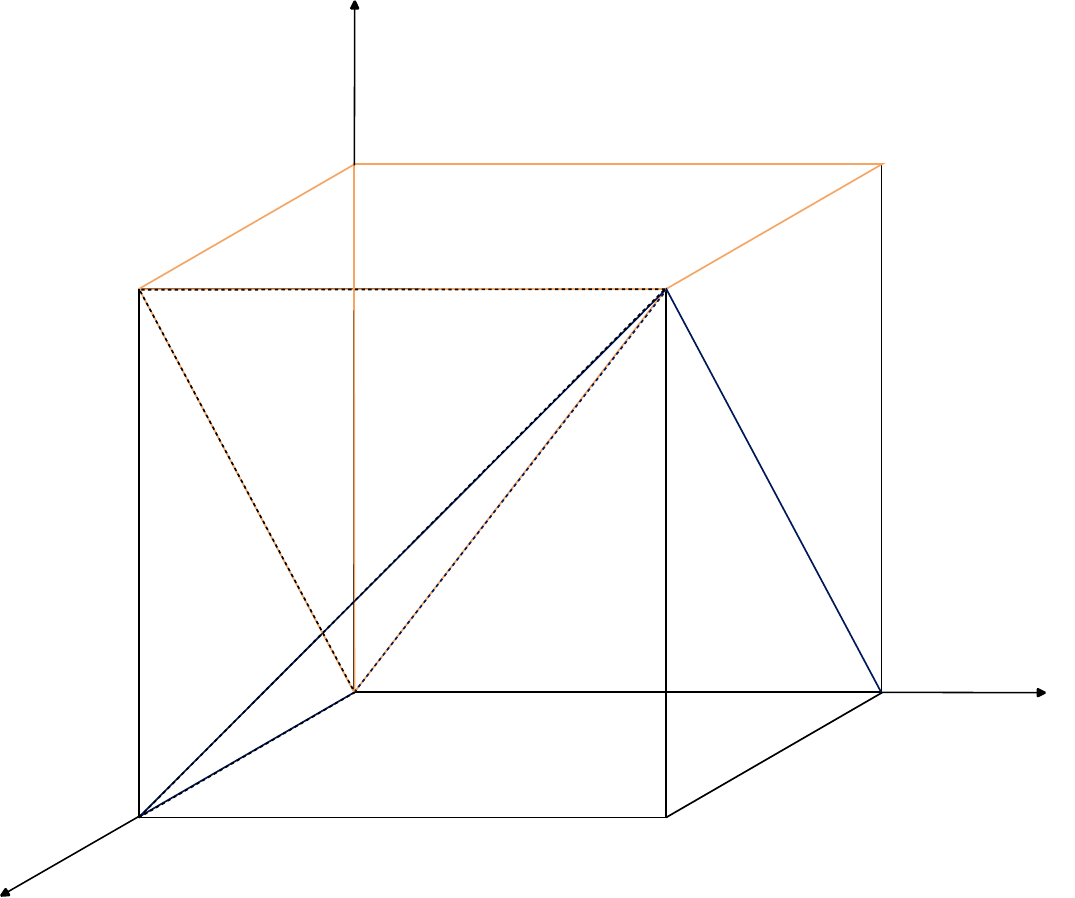}
		\caption{Each child is of  one of four types.}
		\figlab{pyra_types}
	\end{subfigure}
        \caption{%
(a): The vertices of a pyramid of type $6$ and its ten children.
The vertex $\vec{c}_0$ is the anchor coordinate of the parent.
The pyramid refines into six pyramids, one of them upside down, and four
tetrahedra of two types.
(b): The vertices of a pyramid of type $7$ and its ten children.
(c): the two different types of the pyramids and the tetrahedra,
respectively, that occur as children, partition the cube. Defining the refinement this way, the cube (c) is also a sub-cube of (a) and (b).
Note that both types of pyramids share the same anchor coordinate. Only the pyramid in the middle switches the type, that is the center pyramid of (a) is a pyramid of type 7 and the center pyramid of (b) is a pyramid of type 6.
}%
\figlab{pyra_refinement_types}%
\end{figure}%

Upon refinement, we divide a pyramid into ten children: six pyramids and four tetrahedra, as displayed in \figref{pyra_refinement_types}.
Each child can be identified by a shift of its parent's anchor coordinate (see Table \ref{tab:coord-shift}) and its type, which also depends on that of the parent (see Table \ref{tab:child-type}).

\begin{table}
	\begin{tabular}{ccccccccccc}
  \toprule
      	 Child & $0$ & $1$ & $2$ & $3$ & $4$ & $5$ & $6$ & $7$ & $8$ & $9$ \\
  \midrule
         Parent of type $6$ & $0$ & $\vec{h}_x$ &
$\vec{h}_x$ & $\vec{h}_y$ & $\vec{h}_y$ &
$\vec{h}_{xy}$ & $\vec{h}_{xy}$ &
$\vec{h}_{xy}$ & $\vec{h}_{xy}$ &
$\vec{h}_{xyz}$ \\
         Parent of type $7$ & $0$ & $\vec{h}_z$ &
$\vec{h}_z$ & $\vec{h}_z$ & $\vec{h}_z$ &
$\vec{h}_{xz}$ & $\vec{h}_{xz}$ &
$\vec{h}_{yz}$ & $\vec{h}_{yz}$ &
$\vec{h}_{xyz}$ \\
  \bottomrule
	\end{tabular}
        \caption{This table defines how to shift the anchor coordinate of a
pyramid of type $6$ or $7$ to compute the anchor coordinate of one of its
children. $\vec{h}_{\{\text{axes}\}}$ refers to a shift along all the given
axes for a pyramid of length $h$, yet to be divided by two.}%
\tablab{coord-shift}%
\end{table}

\begin{table}
  \begin{tabular}{ccccccccccc}
    \toprule
      Child & $0$ & $1$ & $2$ & $3$ & $4$ & $5$ & $6$ & $7$ & $8$ & $9$ \\
    \midrule
    Parent of type 6 &
              $6$ & $3$ & $6$ & $0$ & $6$ & $0$ & $3$ & $6$ & $7$ & $6$ \\
    Parent of type 7 &
              $7$ & $0$ & $3$ & $6$ & $7$ & $3$ & $7$ & $0$ & $7$ & $7$ \\
    \bottomrule
  \end{tabular}
  \caption{The type for every child of a pyramid of type $6$ and 7. In each
sub-cube we order the elements by its type. Therefore, the tetrahedral
children (types $0$ to $5$) in a sub-cube are the predecessors of the
pyramidal children (types $6$ and $7$).}
\tablab{child-type}%
\end{table}%

The original pyramid, which has no parent, has level $0$ and is called the
\textit{root pyramid} $P^0$.
%
%
For practical purposes, we map the root pyramid into a reference cube of
length $2^\mathcal{L}$, see pyramid $P_6$ in \figref{pyra_types}.
Therefore, every element's anchor has integer coordinates up to the maximal
level $\mathcal L$. Altogether we define the set of elements in a pyramidal
refinement tree as follows:

\begin{defi}
    $\mathcal{P} := \{ P ~|~ P \text{ is a descendant of } P^0\text{, with } 0 \leq l(P) \leq \mathcal{L} \}$.
    \label{pyramid_set}
\end{defi}
Similar to other elements, the pyramidal space-filling curve uses
the interleaving of the coordinates and the type of the pyramid. Whenever we are talking about the coordinates we refer to their binary representation with $\mathcal L$ bits. Additionally, we use $X$, $Y$, and $Z$ to denote the respective parts of the anchor coordinate of an element.

As outlined above, we inherit six types ($0$ to $5$) for the tetrahedral children
and add the types $6$ and $7$ for the pyramidal children (see Figure
\ref{fig:pyra_types}). We define $P^i$ as the $i$-th descendant along a branch towards an element $P$. Furthermore we split the type of an element in the pyramidal refinement tree into three tuples.

\begin{defi}
	\label{Def:types}
        Let $P \in \mathcal{P}$ be an element of level $l$.
$B(P)$ is defined as the $\mathcal{L}$-tuple consisting of the types of
$P$'s ancestors in the first $l$ entries, starting with the root pyramid
$P^0$.
The last $\mathcal{L}-l$ entries of $B(P)$ are zero:
  \begin{align}
	B = B(P) &= \left(\ptype(P^1), \ptype(P^2), \ldots, \ptype(P),
                    0, \ldots, 0 \right)\\
	         &= (b_{\mathcal{L}-1}, \ldots, b_0)
                    \in \left\{0, \ldots, 7\right\}^\mathcal{L} \nonumber,
  \end{align}
	where the entry $b_i$ is given by:
	\begin{equation}
	b_i= \begin{cases}
	\ptype(P^{\mathcal{L}-i}), &\mathcal{L}-1\geq i \geq \mathcal{L}-l,\\
	0, &\mathcal{L}-l > i \geq 0.
	\end{cases}
	\end{equation}
\end{defi}
In the refinement tree of a pyramid we find eight different types of
elements, that is, two pyramidal and six tetrahedral shapes.
Therefore, we can rewrite each entry $b_j$ as a 3-digit binary number
$b_j=(b_j^2, b_j^1, b_j^0)_2$.
This gives rise to three new $\mathcal{L}$-tuples $B^2, B^1$ and $B^0$ with
\begin{equation}
\begin{aligned}
B^0 &= \left(b^0_{\mathcal{L}-1}, \ldots, b^0_0\right),\\
B^1 &= \left(b^1_{\mathcal{L}-1}, \ldots, b^1_0\right),\\
B^2 &= \left(b^2_{\mathcal{L}-1}, \ldots, b^2_0\right).
\end{aligned}
\end{equation}

Eventually, we define the \textit{pyramid index} as the bit-wise interleaving of these
tuples, concatenating the first bit of each in turn, then the second of
each, etc.:
\begin{defi}
        The \emph{pyramid index} of an element $P\in \mathcal{P}$ is given as
the interleaving of the $\mathcal{L}$-tuples $Z$, $Y$, $X$, $B^2$, $B^1$,
and $B^0$, written
	\begin{align}
	m_P(P) := Z \interleaving Y \interleaving X \interleaving
                  B^2 \interleaving B^1 \interleaving B^0 .
	\end{align}
\end{defi}
Consequently, for the same anchor node, the elements are ordered by their type. Hence, a type-$0$ tetrahedra comes before a tetrahedra
of type $3$, followed by the pyramids of type $6$ and type $7$; cf.\
\figref{pyra_types}.

In summary, we can uniquely identify an element in the pyramidal refinement
using its pyramid index and its level.
In effect, we construct the pyramid index such that we are able to prove
that it is an SFC index according to the following definition from \cite{Holke18}:
\begin{defi}
	\label{Def:SFC-index}
	Suppose we have a set of $\mathcal{S}$ elements in the refinement of a pyramid.
        An \emph{SFC index} $\mathcal{I}$ is a map
$\mathcal{I}:\mathcal{S}\rightarrow \mathbb{N}_0$ with the following
properties:
	\begin{enumerate}
                \item Restricted to a level $l$, $\mathcal{I}$ is injective:
$\mathcal{I} \times l: \mathcal{S} \rightarrow \mathbb{N}_0 \times
\mathbb{N}_0$ is injective.
                \item If $E$ is an ancestor of $E'$ then $\mathcal{I}(E)
\leq \mathcal{I}(E')$ (refining is monotonous).
                \item $\mathcal{I}(E) < \mathcal{I}(\hat{E})$ and $\hat{E}$
not a descendant of $E$, then $\mathcal{I}(E) \leq \mathcal{I}(E') <
\mathcal{I}(\hat{E})$ for all descendants $E'$ of $E$ (refining is local).
	\end{enumerate}
\end{defi}

We show these three properties by embedding the pyramid index into a
six-dimensional Morton index.

\begin{theorem}
	\label{Thm:pyra-sfc}
	The pyramid index $m_P$ is an SFC index.
\end{theorem}
We reduce the proof to the fact that the Morton index in any dimension is
formally an SFC index \cite{BursteddeHolke16}.
Thus, we identify the anchor coordinates with the first three coordinates of
the anchor node of the cube,
and we map the three type-representing tuples $B^0$, $B^1$, and $B^2$ onto
the last three coordinates of the anchor node.
Since $0 \leq B^i < 2^\mathcal{L}$, we can use the set of children
$Q_0 := [0,2^\mathcal{L}]^6$ to embed the pyramids.


Given the anchor node of a cube and its level, the set $\mathcal{Q}$ of all
six-dimensional cubes in the refinement is given as $\mathcal{Q}=Q_{(x_0, \ldots,
x_5), l}$, with $x_0, \ldots, x_5$ as the coordinates of the anchor node.
The Morton index for six-dimensional cubes is given by the bit-wise interleaving
of the coordinates of the cube as a generalization of the well-known 2D and 3D
indices \cite{Morton66, TropfHerzog81}:
\begin{equation}
  m_Q(Q) = X^5 \interleaving X^4 \interleaving X^3 \interleaving X^2\interleaving X^1\interleaving X^0 .
\end{equation}
We use Proposition \ref{Pro:sic-dim-mort} to map the pyramid index to six-dimensional cubes.
\begin{proposition}
	\label{Pro:sic-dim-mort}
	The map
        \begin{equation}
	\begin{aligned}
	\Theta : \mathcal{P} &\rightarrow \mathcal{Q} , \\
	P &\mapsto Q_{(B^2(P), B^1(P), B^0(P), x(P), y(P), z(P)), l(P)}
	\end{aligned}
        \end{equation}
	is injective and satisfies
	\begin{equation}
	m_Q(\Theta(P)) = m_P(P) .
	\end{equation}
        Furthermore, it ensures the property that $P'$ is a child of $P$ if
and only if $\Theta(P')$ is a child of $\Theta(P)$.
\end{proposition}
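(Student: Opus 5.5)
Three things have to be shown: that $\Theta$ lands in $\mathcal{Q}$ (a legitimate cube of level $l(P)$), that it is injective with $m_Q\circ\Theta=m_P$, and that it preserves the child relation in both directions. We work throughout with the binary digits of the $\mathcal{L}$-bit coordinates. The key preliminary is a structural description of a level-$l$ element $P$. By Table~\ref{tab:coord-shift}, the $i$-th child's anchor is the parent's anchor shifted by $0$ or $h/2$ in each axis, where $h=2^{\mathcal{L}-(i-1)}$ is the parent's length. Hence by induction along the branch $P^0,P^1,\dots,P^l=P$, the anchor bits $x_{\mathcal{L}-i},y_{\mathcal{L}-i},z_{\mathcal{L}-i}$ are fixed exactly when passing from $P^{i-1}$ to $P^{i}$, and the last $\mathcal{L}-l$ bits of $x,y,z$ vanish. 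For tetrahedral descendants we inherit the same statement from Bey's refinement \cite{BursteddeHolke16}. By Definition~\ref{Def:types}, the last $\mathcal{L}-l$ entries of $B^2,B^1,B^0$ are also zero. So all six coordinates are multiples of $2^{\mathcal{L}-l}$ in $[0,2^{\mathcal{L}})$, and $\Theta(P)$ is a valid level-$l$ anchor cube in $Q_0$.

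The identity $m_Q(\Theta(P))=m_P(P)$ is then immediate from the definitions. We simply order the six coordinates so that the interleaving $X^5\interleaving\cdots\interleaving X^0$ reads $Z,Y,X,B^2,B^1,B^0$. If necessary, we relabel the components of $\Theta$ consistently with the stated order; this is only a fixed permutation of axes, and Morton SFC properties are invariant under it as long as the same permutation is used for both indices.

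For injectivity, suppose $\Theta(P)=\Theta(P')$. Then $l(P)=l(P')$ and the tuples $B(P),B(P')$ agree, so the two elements have the same sequence of ancestor types. We then show by induction on the level that the anchor and type of $P^{i}$ determine the whole element $P^i$, that is, its shape and position. The key step is to show that, given the parent, a child is uniquely determined by its (anchor shift, type) pair. This is a finite check of Tables~\ref{tab:coord-shift} and~\ref{tab:child-type}: for a type-6 parent the pairs are $(0,6),(x,3),(x,6),(y,0),(y,6),(xy,0),(xy,3),(xy,6),(xy,7),(xyz,6)$, which are pairwise distinct, and similarly for type 7. For tetrahedral parents the same fact is the known injectivity of Bey's type/anchor pair. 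Since an element is the unique descendant of $P^0$ reached by this chain, $P=P'$.

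For the child relation, note that $\Theta(P')$ is a child of $\Theta(P)$ exactly when $l(P')=l(P)+1$ and all six coordinates of $\Theta(P)$ are obtained from those of $\Theta(P')$ by zeroing bit $\mathcal{L}-l(P)-1$.
\begin{itemize}
\item ($\Rightarrow$) If $P'$ is a child of $P$, the anchor update only touches that bit, and $B(P')$ equals $B(P)$ with the new type entered in the entry $b_{\mathcal{L}-l-1}$. So $\Theta(P')$ is a child of $\Theta(P)$.
\item ($\Leftarrow$) If $\Theta(P')$ is a child of $\Theta(P)$, let $\hat P$ be the parent of $P'$. By ($\Rightarrow$), $\Theta(\hat P)$ is the parent cube of $\Theta(P')$, which is $\Theta(P)$. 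By injectivity, $\hat P=P$.
\end{itemize}

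The main obstacle is making the "bit-by-bit" description of the anchor rigorous across the mixed pyramid/tetrahedron tree. In particular, we must confirm that tetrahedral children of pyramids, together with the subsequent Bey refinement, use the same anchor-shift convention, namely shifts of exactly $0$ or $h/2$ per axis, so that the claimed digit structure holds for every element of $\mathcal{P}$. We would handle this by checking Table~\ref{tab:coord-shift} and citing the corresponding tetrahedral statement from \cite{BursteddeHolke16}.
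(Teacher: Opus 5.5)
Your proof is correct and has the same structure as the paper's. The index identity is read off the definitions. The forward child direction is the bitwise criterion that each of the six coordinates changes only at bit $\mathcal{L}-l-1$, with the new type bits entering $B^j$ there. The reverse direction takes the parent of $P'$ and combines uniqueness of the cube parent with injectivity.

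Where the paper only asserts injectivity (``an element is uniquely determined by its index and its level''), you prove it. You do this by induction, using the truncation structure of the anchor and the pairwise distinctness of the (shift, type) pairs in Tables~\ref{tab:coord-shift} and~\ref{tab:child-type}. Your remark on fixing the axis order so that the interleaving reads $Z,Y,X,B^2,B^1,B^0$ is a worthwhile clarification of the paper's labeling.
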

\begin{proof}[Proof of Proposition \ref{Pro:sic-dim-mort}]
	The equation $m_P(P)=m_Q(\Theta(P))$ follows from the definition of the indices on $\mathcal{P}$ and $\mathcal{Q}$. Using that an element is uniquely determined by its index and its level, we see that $\Theta$ is injective.
	To show the second property, let $P, P'\in \mathcal{P}$, where $P'$ is a child of $P$. Additionally let $l = l(P)$. We use that $Q' := \Theta(P')$ is a child of $Q:=\Theta(P)$ if and only if for each $i\in\left\{0, \ldots, 5\right\}$ it holds that
	\begin{align}
	\label{Equ:Incube}
	x_i(Q')\in\left\{x_i(Q), x_i(Q)+2^{\mathcal{L}-(l+1)}\right\}.
	\end{align}
        We know that \eqnref{Incube} holds for the $x$-, $y$-, and
$z$-coordinate of the anchor node of $P'$, meaning it holds for $i\in
\left\{3,4,5\right\}$.
By definition $B^j(P')$ is the same as $B^j(P)$ except for the new type at
position $\mathcal{L}-(l+1)$, for which
\begin{equation}
\begin{aligned}
  B^j(P')_{\mathcal{L}-(l+1)} & =
    b^j_{\mathcal{L}-(l+1)}(P')\in \left\{0,1\right\} \quad\text{and} \\
   B^j(P)_{\mathcal{L}-(l+1)} & = 0 .
\end{aligned}
\end{equation}
In conclusion, \eqnref{Incube} holds for $i\in\left\{0,1,2\right\}$, and
the first implication is shown.

        For the reverse implication we assume that $\Theta(P')$ is a child
of $\Theta(P)$. Since the level of $P'$ is bigger than $0$, we know that $l(\Theta(P'))>0$
and $P'$ has a parent $S$. By the uniqueness of the parent of a cube the
identity $\Theta(S) = \Theta(P)$ must hold. Thus, we obtain $S=P$, since
$\Theta$ is injective and $P'$ is the child of $P$.
\end{proof}

On this basis, we proceed to prove Theorem \ref{Thm:pyra-sfc}.
\begin{proof}[Proof of Theorem \ref{Thm:pyra-sfc}]
        The first two properties of an SFC index can be deduced by the fact
that the Morton index for six-dimensional cubes is an SFC index together
with Proposition \ref{Pro:sic-dim-mort}.
        Using the last property of Proposition \ref{Pro:sic-dim-mort} inductively, we use that $P'$ is a
descendant of $P$ if and only if $\Theta(P')$ is a descendant of
$\Theta(P)$ to guarantee the third property of an SFC index.
\end{proof}

\begin{SCfigure}
\centering
	\includegraphics[width=0.3\textwidth]{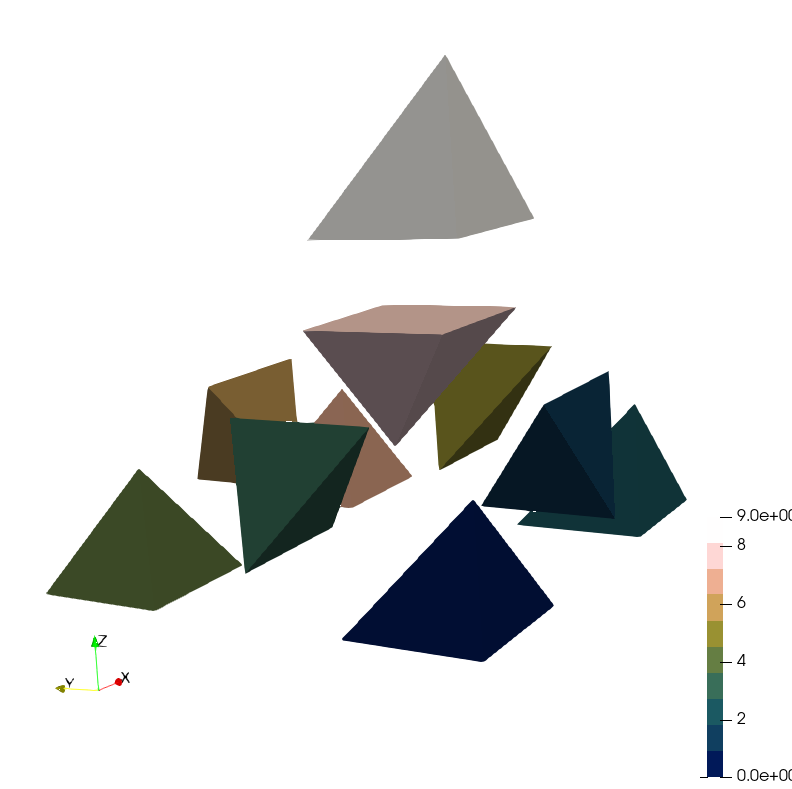}
        \caption{A single pyramid that is refined once. Each child is colored
according to its index. The element with the index $0$ is colored red, the last
element is colored blue.}
\figlab{pyra-sfc}%
\end{SCfigure}%

With the help of Figure \ref{fig:pyra-sfc} we define a \textit{local index} of an
element. The local index $i$ refers to the $i$-th child of an element according
to the SFC index. For a parent of type $6$ the children are already ordered by
the SFC, while the local index for parents of type $7$ is not straightforward;
see \tabref{local-index}.
\begin{table}
	\begin{tabular}{ccccccccccc}
		\toprule
                Child & $P_0$ & $T_1$ & $P_2$ & $T_3$ & $P_4$ &
                        $T_5$ & $T_6$ & $P_7$ & $P_8$ & $P_9$ \\
        \midrule
        Parent of type 6 &
                        $0$ & $1$ & $2$ & $3$ & $4$ & $5$ & $6$ & $7$ & $8$ & $9$ \\
        Parent of type 7 &
                        $8$ & $6$ & $3$ & $9$ & $7$ & $1$ & $2$ & $5$ & $4$ & $0$ \\
		\bottomrule
	\end{tabular}
        \caption{The local index for every child of a pyramid of type $6$ is
                 consecutive by construction, while the order is non-trivial
                 for a parent of type 7.
}%
\tablab{local-index}%
\end{table}%
Lastly, we observe the following two Lemmata by analyzing the
refinement of a pyramid (see Figure \ref{fig:pyra-sfc}) and the refinement
of type $0$ or $3$ tetrahedra, which are the only tetrahedral children of a
pyramid.
\begin{lemma}
        The five pyramidal children touching a face of their parent lie in a
corner of their parent.\label{lem:pyra_corner}
\end{lemma}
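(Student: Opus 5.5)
We verify the claim directly on the explicit ten-child decomposition given by \tabref{coord-shift} and \tabref{child-type}, working in coordinates. By the symmetry between the two orientations (a type-$7$ pyramid is the point reflection of a type-$6$ pyramid through the centre of its enclosing cube), it suffices to treat a parent of type $6$; the type-$7$ case then follows by applying the same reflection to every child. The reflection maps children to children, faces to faces and corners to corners. We normalise the parent to anchor $\vec{c}_0=0$ and length $h=2$, so each child lives in a unit sub-cube with anchor in $\{0,1\}^3$.

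First we write down the vertex sets. The type-$6$ pyramid $P_6$ in the cube $[0,2]^3$ has the square base $z=0$ and its apex at a top corner of the cube; in \figref{pyra_refinment_type_6} these vertices are $\vec{c}_0,\vec{x}_1,\vec{x}_2,\vec{x}_3,\vec{x}_4$. Type-$7$ has apex $\vec{c}_0$ and its base in the top face. Using the shifts and types of \tabref{coord-shift} and \tabref{child-type}, we list the vertices of the six pyramidal children, namely $0,2,4,7,8,9$. Child $8$ is the type-$7$ pyramid in the sub-cube anchored at $(1,1,0)$ and is the inverted centre pyramid. The remaining five are type-$6$ pyramids, which are the scaled copies of the parent at $\vec{c}_0$ and at the midpoints of the other four vertices. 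Concretely, each of these five is the homothety of the parent with ratio $1/2$ centred at one of the five parent vertices, since all of them share the parent's orientation.

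Next we check the two parts of the claim. A pyramid obtained by a homothety of ratio $1/2$ centred at a parent vertex $v$ contains $v$, so it "lies in a corner". It also has faces contained in the parent faces through $v$, so it touches a parent face. For the inverted child $8$, we check from its vertex list that it meets each parent face only in lower-dimensional sets or not at all. Its apex points down and touches the base only at the point $(1,1,0)$, and its base lies in the midplane $z=1$. Hence it does not touch a parent face in a $2$-dimensional sense, and it contains no parent vertex. Consequently, exactly five pyramidal children touch a face, and these are precisely the five corner children, which is the statement.

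The main obstacle is making "touching a face" precise, namely as sharing a $2$-dimensional face portion, so that child $8$ is excluded. The point contact of its apex with the base centre would otherwise have to be ruled out separately. The computations themselves are routine vertex checks read off from the tables.
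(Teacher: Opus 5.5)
Your argument is correct and is essentially the paper's own justification, which simply reads the lemma off the explicit ten-child refinement (Figure~\ref{fig:pyra-sfc}, Tables~\ref{tab:coord-shift} and~\ref{tab:child-type}); you make that inspection explicit in coordinates, and your reduction of type $7$ to type $6$ via $\vec{x}\mapsto(2,2,2)-\vec{x}$ is valid, since it sends the type-$6$ children $0,2,4,7,9,8$ to the type-$7$ children $9,8,6,4,0,3$ (you should say this was checked against the tables rather than assert it). One small correction: the inverted child meets the parent boundary not only at its apex but also along its four top edges, which lie in the four triangular parent faces, so the two-dimensional reading of ``touching'' is needed there as well; observing that its faces lie in the planes $z=1$, $x=1$, $y=1$, $x-z=1$, $y-z=1$, none of which is the plane of a parent face, closes that step.
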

\begin{lemma}
        A descendant of a tetrahedron in a pyramidal refinement tree that
has a pyramid as a neighbor lies in a corner of its parent.
\label{lem:tet_corner}
\end{lemma}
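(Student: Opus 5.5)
The lemma concerns tetrahedral descendants of the type-$0$ and type-$3$ tetrahedral children of a pyramid. It says such a tetrahedron can have a pyramid as a face neighbor only if it occupies a corner of its parent. I would prove it in two stages.

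\emph{Stage 1: reduce to faces lying in a pyramid's triangular face.} Let $T$ be a tetrahedral descendant and $P$ a pyramidal element sharing a face with $T$. By the refinement scheme, pyramids and tetrahedra meet only along triangular faces. I would show that any such interface lies in the boundary of the tetrahedral child $T_0$ of a pyramid (type $0$ or $3$) from which $T$ descends. The reason is that the interior of $T_0$ is filled by Bey-refined tetrahedra only. So the shared face of $T$ is contained in a face of $T_0$, and that face of $T_0$ is itself adjacent to a pyramid. From \figref{pyra_refinment_type_6} and \figref{pyra_refinment_type_7} I would read off which faces of a type-$0$ or type-$3$ tetrahedral child touch pyramidal siblings. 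I would also check which faces lie on the parent pyramid's boundary, where a pyramid can be a neighbor across a tree boundary or after refinement of the neighbor.

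\emph{Stage 2: structure of Bey refinement near the boundary.} Bey's red refinement of a tetrahedron gives four corner children and four interior children coming from the octahedron. The key fact I would verify from the Bey child tables in \cite{BursteddeHolke16,Bey92} is this: each interior child meets each face of the parent either not at all or only in an edge or vertex. No interior child contains a full face-triangle piece, since the octahedron meets each face of the parent only in the central subtriangle. The central subtriangle is covered by one octahedral child. So I must treat it carefully and check whether the octahedral children's faces are interior to the parent or lie on its faces.

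Bey subdivision splits each face into four triangles: three corner triangles and one center triangle. I expect the center triangle to belong to an interior child, which would contradict the lemma. The argument will therefore hinge on the claim that the faces adjacent to pyramids are hit by the refinement in a special way. Concretely, I would use the coordinate description (anchor plus type $0$ or $3$) and the type-dependent child tables. With these I would check, for both types $0$ and $3$ and for each face that can border a pyramid, which children contain a triangle of that face. This is a finite check, and it then propagates inductively by level. By induction on level difference, a descendant touching the pyramid must then be a corner child at each step, in particular of its parent.

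\emph{Main obstacle.} The hard step is exactly this finite case analysis: confirming that no non-corner Bey child has a full face on a pyramid-adjacent face, for both tetrahedron types and for their descendant types. I would first try to exploit the cube-commutativity. The faces of type-$0$ and type-$3$ tetrahedra that border pyramids lie in the diagonal planes through the sub-cube. Within the cube refinement, such a face is tiled by faces of sub-cube corner tetrahedra, which should make the check mechanical via Table~\ref{tab:coord-shift}-style anchor shifts.
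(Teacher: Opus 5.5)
There is a genuine gap, and it sits exactly where you placed your main obstacle: the finite check cannot succeed, because the claim it is supposed to confirm is false. Bey's rule does not know what lies across a face. A pyramid-adjacent face of a type-$0$ or type-$3$ tetrahedron is therefore refined like any other face, and its central midpoint subtriangle is a full face of an octahedral (non-corner) child. Your first sentence of Stage~2 is in fact contradicted by your next one.

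Concretely, in a cube the type-$0$ tetrahedron $\{y\le z\le x\}$ shares its face $y=z$ with the type-$6$ pyramid $\{z\le\min(x,y)\}$. This configuration occurs, e.g., for children $5$ and $7$ of a type-$6$ parent. The central subtriangle of that face, with vertices $(\tfrac12,0,0)$, $(\tfrac12,\tfrac12,\tfrac12)$, $(1,\tfrac12,\tfrac12)$, is a face of the type-$4$ child $\{x'\le y'\le z'\}$ in the sub-cube anchored at $(\tfrac12,0,0)$. So no analysis of the tetrahedron's own refinement, whether by anchor shifts or by cube commutativity, can yield the lemma.

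The missing idea is to look across the face. Suppose $T$ has a same-level pyramid neighbor $Q$ across $F$. Then the parent $S$ of $T$ (a tetrahedron) and the parent $Q'$ of $Q$ (necessarily a pyramid) are face neighbors across a triangle $f\supseteq F$, and $F$ is one of the four midpoint subtriangles of $f$. This lift to the parent level, not the reduction to the lowest tetrahedral ancestor $T_0$, is what drives the induction.

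On the pyramid side, Lemma~\ref{lem:pyra_corner} puts corner pyramids on the three corner subtriangles of each triangular face of $Q'$. The central subtriangle is covered by a tetrahedral child. For a type-$6$ parent, the centers of its four triangular faces belong exactly to the tetrahedral children $1,3,5,6$ of Table~\ref{tab:child-type}; in the example above it is the type-$3$ child $1$ of that pyramid. Hence the octahedral child of $S$ on the center of $f$ faces a tetrahedron, and $T$ must be a corner child.

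An equivalent argument in terms of types is the one the paper's later remark gestures at; the paper itself gives no formal proof and reads the lemma off the refinement pictures. It runs as follows:
\begin{enumerate}
\item only types $0$ and $3$ can abut a pyramid;
\item the lift forces $S$ to have type $0$ or $3$;
\item corner children inherit their parent's type;
\item the four octahedral children of a type-$0$ or type-$3$ tetrahedron have types in $\{1,2,4,5\}$.
\end{enumerate}
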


A visualization of both lemmata can be seen in \figref{pyra_tet}.
\begin{SCfigure}
	\resizebox{50mm}{!}{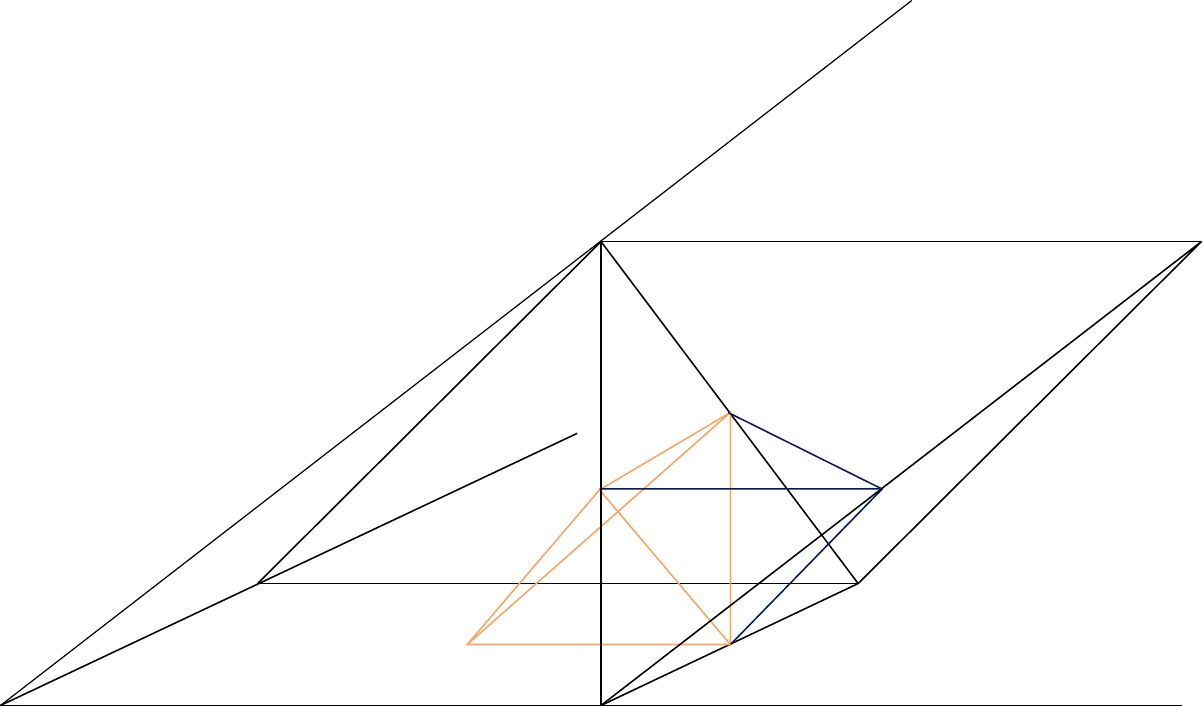}
	\caption{A pyramid and its neighbor along $f_3$. In the refinement of the black (outer) pyramid, all children that do not have type $3$ or $0$ only have tetrahedral neighbors within the same tree.\label{fig:pyra_tet}}
\end{SCfigure}

\section{Algorithms at the element level}
\seclab{lowlevel}

In this section we give an introduction into the element-wise (low-level)
algorithms for pyramids with a focus on the differences to the corresponding algorithms for other element shapes.
Major differences for pyramids take effect when constructing the parent, as
discussed in \secref{parent}.
\secref{child} is a brief example of a typical low-level algorithm, namely constructing the child element of a given index.
We proceed constructing neighboring elements inside the same pyramidal tree
in \secref{within-root-neighbor}.
Neighbor elements across trees implicate a transformation and possibly a
translation into a different shape, which we cover in
\secref{outside-root-neighbor}.

\subsection{Parent}
\label{Sec:parent}

For an element $E$, we can compute the ancestor of level $l(E)-1$ using the \texttt{parent} algorithm. Given the type of $E$ and its anchor coordinate, we can compute the type of the parent. The level and the coordinates of the parent are given implicitly, by setting the $l(E)$-th bit of $E$'s coordinate to zero.
We refer to this operation as ``cutting'', especially when multiple bits of the coordinate are set to zero.

For a tetrahedron $T$ descending from a pyramidal root element, the computation of the type is not trivial. In contrast to a pure tetrahedral tree (or other trees containing only one element shape), we cannot determine the type of the parent without further analysis.

One remedy would be to recompute the $\mathcal{L}$-tuple $B$ (see Def.\
\ref{Def:types}), descending from the root pyramid down to level $l(E)-1$, to
check at which level the last pyramidal ancestor of the tetrahedron occurs.
This, however, is a rather costly computation; to avoid it, we introduce the
field $\switchShapeLevel$: For each element, it stores the smallest level at
which the ancestor has the shape of a tetrahedron. We can compute this field
during the construction of an element without using the iterative
computation of the types of all ancestors.
For all pyramids we set $\switchShapeLevel=-1$, since they do not have any
tetrahedral ancestor at all.

For a tetrahedron with a pyramidal parent,
the type of the parent can be
determined by comparing the $z$-coordinates:
\begin{itemize}
    \item If the tetrahedron has the same $z$-coordinate as the anchor coordinate of the parent, the parent has type $6$.
    \item
    Otherwise, the parent has type $7$.
\end{itemize}
Aside from this computation of the type of the parent, the \texttt{parent} algorithm does not change for elements in a pyramidal refinement, see Algorithm \ref{Alg:Parent}.

\begin{algorithm}
	\TitleOfAlgo{\texttt{t8\_dpyra\_parent}}
    \setlength{\algoCommentLength}{5.0cm}
    \DontPrintSemicolon
	\KwData{An element $E$ and an empty element $P$, which is going to be filled with the data of the parent of $E$.}
    \myAlgoLine{Set level of parent $P$}
    {$P.\level = E.\level - 1$}

	\uIf{\textnormal{$E$ is a pyramid}}{

        \myAlgoBlockHeader{$P$ is a pyramid}

        \myAlgoLine{Compute the cube id of E}
            {$c_{id} \mygets{36pt} \texttt{cube\_id}(E, E.\level)$}

        \myAlgoLine{Determine type of $P$}
            {$P.\type \mygets{36pt} \texttt{from}(E.\type, c_{id})$}

        \myAlgoLine{Set $P$'s coordinates via cutting}
            {$P.\coord \mygets{36pt} \texttt{cut}(E.\coord)$}


        \myAlgoLine{Set $\switchShapeLevel$ to invalid}
            {$P.\switchShapeLevel = -1$}
		
		}
    \Else{
	\uIf{$E.\switchShapeLevel = E.\level$}
		{
			\myAlgoBlockHeader{$P$ is a pyramid}

            \myAlgoLine{Determine type of $P$}
            {$P.\type \mygets{36pt} \texttt{from}(E.\coord)$}

            \myAlgoLine{Set $P$'s coordinates via cutting}
                {$P.\coord \mygets{36pt} \texttt{cut}(E.\coord)$}
			
			
			\myAlgoLine{Set $\switchShapeLevel$ to invalid}
                {$P.\switchShapeLevel = -1$}
		}
	\Else{
        \myAlgoBlockHeader{$P$ is a tetrahedron}

        \myAlgoLine{Compute $P$ as tetrahedral parent}
            {$P = \texttt{tet\_parent}(E)$}
		
        \myAlgoLine{Set lowest-level tetrahedral ancestor}
            {$P.\switchShapeLevel = E.\switchShapeLevel$\\}
	}
    }
	\caption{The algorithm to compute the parent of an element. To avoid the recomputation of the types of all ancestors, we introduce a new field $\switchShapeLevel$ that can be used to look up the smallest level at which an element's ancestor is a tetrahedron.
    \label{Alg:Parent}}
\end{algorithm}

%
%
%
%
		
%
%
%
%
%

\subsection{Child}
\label{Sec:child}

The child algorithm computes the $i$-th child of an input element $E$, with $0 \leq i < N_c$, where $N_c$ denotes the maximal number of children of $E$.
%
We can subdivide the problem into two cases:

\enumerateCases{
\item
If $E$ is a pyramid, we shift the anchor coordinate based on the type of the input element by $0$ or $2^{\mathcal{L}-(l+1)}$, with $l=l(E)$. Using the type of $E$ and the local id $i$, we can determine the type of the child. Altogether,  we can determine the child using the computed coordinates, the type, and the implicitly given level.
\item
If $E$ is a tetrahedron descending from a pyramid, we can reuse the algorithm implemented for tetrahedra, since the types of pyramids and tetrahedra do not interfere. Furthermore, the computation of the anchor coordinate is the same for pyramids and tetrahedra.
}
We use Tables \ref{tab:coord-shift} and \ref{tab:child-type} to look up these values for pyramids, see Algorithm \ref{Alg:Child}.

\begin{algorithm}
	\TitleOfAlgo{\texttt{t8\_dpyra\_child}}
    \setlength{\algoCommentLength}{5.5cm}
   \DontPrintSemicolon
    \small
	\KwData{An element $E$, a local index $i$, and an empty element $C$, which is going to be filled with the data of the $i$-th child of $E$.}

    \myAlgoLine{Set level of the child $C$}
    {$C.\level = E.\level + 1$}
	\uIf{\textnormal{$E$ is a pyramid}}{

        \myAlgoLine{Shift the coordinates using Table \ref{tab:coord-shift}}
		{$C$.\coord \mygets{36pt} $E$.coord + \texttt{shift}(${i}$)}

        \myAlgoLine{Set the type using Table \ref{tab:child-type} }
        {$C$.type = \texttt{child\_type}($E$.type, $i$)}

        \If{\textnormal{$C$ is a tetrahedron}}{
            \myAlgoLine{Set $\switchShapeLevel$}
            {$C.\switchShapeLevel = C.\level$ }
        }
	}
	\Else{
        \myAlgoBlockHeader{$E$ is a tetrahedron}

        \myAlgoLine{Call the tetrahedral routine}
        {$C$ = \texttt{t8\_tet\_child($E$, $i$)}}

        \myAlgoLine{$\switchShapeLevel$ is the same for the child}
		{$C$.$\switchShapeLevel$ = $E$.level}
    }
	\caption{The algorithm to compute the $i$-th child of an element. We use the Tables \ref{tab:coord-shift} and \ref{tab:child-type} for the computation of the anchor coordinate and the type of the child.
    \label{Alg:Child}}
\end{algorithm}

\subsection{Same-tree neighbor elements}
\seclab{within-root-neighbor}
This section  describes how to compute the neighbors of an element inside of a tree. In \secref{outside-root-neighbor} we describe the computation of neighbors over tree boundaries.
The computation of ghost elements for one process requires the computation of
an element's neighbor along a given face. We divide the problem into
three sub-tasks and solve them subsequently:
\enumerateSteps{
	\item Define the shape of the neighbor. The neighbor of a pyramid may be another pyramid
	or a tetrahedron; analogously, a tetrahedron's neighbor may either be another tetrahedron or a pyramid.
	\item Compute the coordinates and the type of the neighbor.
	\item Define the number of the dual face, i.e., the face of
	the neighbor touching the original element.
}

In order to identify the element's faces, we define a numbering of the corners
and the faces of a pyramid. To be consistent with previous work, we choose
the labels given in \cite{BursteddeHolke17}; see Figure \ref{fig:pyra_face}. We use $f_i$ to denote the $i$-th face of an element, or $f$ for a face in general. We use $C$ to address a corner and $C_i$ for the $i$-th corner.
Note that we only define the numbering on a type-$6$ pyramid, because a pyramid
of type $7$ can be seen as a $180$-degree rotation of a type-$6$ pyramid.

\begin{figure}
    \begin{floatrow}
        \ffigbox{
        \def\svgwidth{50mm}{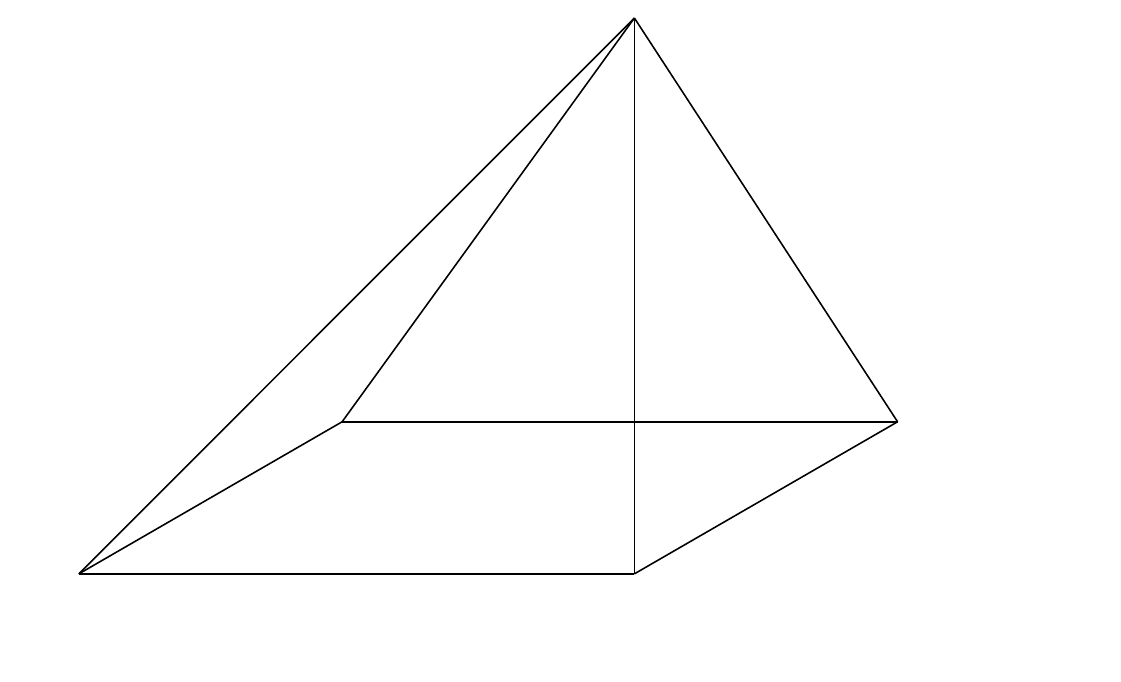}%
        }
        {\caption{The corners of a pyramid.}
        \label{fig:pyra_face}}

        \capbtabbox{
		\begin{tabular}{ccccc}
				\toprule
				Face & \multicolumn{4}{c}{corners} \\
				\midrule
				$f_0$ & $C_0$ & $C_3$& $C_4$ & $-$ \\
				$f_1$ & $C_1$ & $C_2$& $C_4$ & $-$ \\
				$f_2$ & $C_0$ & $C_1$& $C_4$ & $-$ \\
				$f_3$ & $C_2$ & $C_3$& $C_4$ & $-$ \\
				$f_4$ & $C_0$ & $C_1$& $C_2$ & $C_3$ \\
				\bottomrule
			\end{tabular}
        }
        {\caption{The faces of a pyramid.}%
        \label{Tab:pyra_face}}%
    \end{floatrow}
\end{figure}%

\ourSubSubsection{Shape of the neighbor}
In the following, we show how to determine the shape of the same-tree neighbor.
For a pyramid, the quadrilateral face $f_4$ is connected to another pyramid,
while all other neighbors are always tetrahedra.
Note that the tetrahedral children of a refined pyramid always have type $0$ or
$3$, so only these can have a pyramid as a same-tree (and same-level) neighbor.
Therefore, we can use the already existing algorithms for all tetrahedra that do not
have type $0$ or $3$; see \cite{BursteddeHolke16}.

For a tetrahedron of type $0$ or $3$, we distinguish two cases when identifying the neighbor's shape:
\enumerateCases{
	\item The parent of the tetrahedron is a pyramid. In that case, the tetrahedron has three faces touching its pyramidal siblings and one face connected to another tetrahedron. This face's number depends on the type of the pyramidal parent and the local id of the element itself; it is defined in \tabref{tet-face}.
	\item The parent of the tetrahedron is another tetrahedron.
    It may have many tetrahedral ancestors, the most interesting being the tetrahedral ancestor with the lowest level,
    because its parent is a pyramid. If the input element touches a face of this lowest-level tetrahedral ancestor, we can reduce this case to the first case.
}
\begin{table}
\begin{subtable}{.48\textwidth}
\centering
    \begin{tabular}{ccc}%
                        \toprule
						Local id & pyramid $6$ & pyramid $7$ \\
						\midrule
						$1$ &   $1$ &   $3$ \\
						$2$ & $-$ &   $3$ \\
						$3$ &   $1$ & $-$ \\
						$5$ &   $0$ &   $2$ \\
						$6$ &   $0$ & $-$ \\
						$7$ & $-$ &   $2$ \\
      \bottomrule
    \end{tabular}%
    \caption{The faces of a tetrahedron inside a pyramid which extrude to a tetrahedron.}%
    \tablab{tet-face}%
\end{subtable}
\begin{subtable}{.48\textwidth}
\centering
    \begin{tabular}{cccc}%
						\toprule
                                                Face
						& $CTF_0$ & $CTF_1$& $CTF_2$ \\
						\midrule
						$f_0$ & $1$ & $2$ & $3$ \\
						$f_1$ & $0$ & $2$ & $3$ \\
						$f_2$ & $0$ & $1$ & $3$ \\
						$f_3$ & $0$ & $1$ & $2$ \\
      \bottomrule
    \end{tabular}
    \caption{The types of the three children touching a given
              face of a tetrahedron.}%
     \tablab{child-face}%
\end{subtable}
\caption{The face connectivity between pyramids and tetrahedra.}%
\end{table}
In conclusion, for both cases we need to know which face of a tetrahedron is
touching a pyramid and which one is not. The distinction is simplified by
introducing the following definition of \textit{valid faces}:

\begin{defi}
    Let $f$ be a face of a tetrahedron with pyramidal parent;
    if $f$ is connected to a pyramid, it
    is called a valid face. \label{Def:valid_face}
\end{defi}
It is important to realize that the problem of determining the shape of a tetrahedron's neighbor is not solved yet. In the above case (b), knowing an element is connected to a valid face is not
sufficient to determine the shape of the neighbor.
To derive the necessary additional checks, we observe that for
a type-$3$ tetrahedron refined multiple times, the type of the
children not in their parent's corner alternates between $2$ and
$3$; since a type-$2$ tetrahedron can never touch a pyramid, this implies that only the children in the corner can have a pyramid as a neighbor.

Therefore, we check whether all ancestors up
to the lowest-level tetrahedral ancestor are a child in the corner of
their parent. If so, it follows that these ancestors and the element have the same type and
the same face numbering, because the children at the corner always have the
same type as their parents; see \cite{BursteddeHolke16}. Hence, we have to
check whether the given face of the ancestor is valid to determine the shape
of the neighbor of the input tetrahedron.  Now these arguments combine into the
following theorem.
\begin{theorem}
	Let $T$ be a tetrahedron of type $0$ or type $3$ that has a pyramid as an ancestor, $A = \arg\min_l \{S.l: \mathrm{S\ tetrahedron\ and\ ancestor\ of\ T}\}$ its lowest-level tetrahedral ancestor, and $T' \in \{\mathrm{S\ ancestor\ of\ T}: S.l >= A.l\}$ any tetrahedral ancestor of $T$; then $T$ has a pyramid neighbor along face $f$ if and only if:
    \begin{enumerate}[(i)]
        \item $T$ and all $T'$  are in a corner of their parent touching face $f$ of the parent, and
        \item  face $f$ of $A$ is a valid face of $A$.
    \end{enumerate}
    \label{Thm:tet_touches_pyra}
\end{theorem}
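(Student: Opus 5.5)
We prove the two implications separately. Both rest on one geometric reduction: if $T$ and every tetrahedral ancestor $T'$ with $l(T')>l(A)$ is a corner child of its parent, and all these corners touch face $f$ of the parent, then face $f$ of $T$ lies inside face $f$ of $A$. We first record two standard facts from Bey's refinement \cite{BursteddeHolke16}. A child in a corner of its parent has the same type as the parent, and hence the same local face numbering. Such a child touches exactly three of the parent's faces, namely those containing that corner, and on each of them it covers a scaled copy of the parent's face. Iterating along the chain $T, \dots, A$ by induction on $l(T)-l(A)$ therefore gives $f(T)\subseteq f(A)$, with the same face label $f$ throughout. The case $l(T)=l(A)$ is trivial, since then $T=A$.

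\emph{Sufficiency.} Assume (i) and (ii). By (ii), face $f$ of $A$ is shared with a pyramid. Here $A$ is a child of a pyramid, and by the refinement pattern (\tabref{tet-face}) face $f$ of $A$ is a full face of a pyramidal sibling, or of a pyramidal same-level neighbor. By the reduction, $f(T)\subseteq f(A)$. The element across $f(T)$ at level $l(T)$ is therefore a descendant of that pyramid, touching its face. It remains to show this descendant is itself a pyramid. This is where Lemma~\ref{lem:pyra_corner} enters, together with the observation that triangular faces of a pyramid are covered by corner pyramidal children and the tetrahedra adjacent to them. Rather than argue through the triangular pattern directly, we use Lemma~\ref{lem:tet_corner} and the symmetry of the neighbor relation: the same-level element across $f(T)$ is the corresponding corner descendant on the other side. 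On the pyramid side, the corner children touching a triangular face are pyramids by Lemma~\ref{lem:pyra_corner}. Hence the neighbor is a pyramid.

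\emph{Necessity.} Assume $T$ has a pyramidal neighbor across $f$, and suppose (i) fails. Take the deepest tetrahedral ancestor $T'$ (possibly $T$ itself) that is not a corner child of its parent touching face $f$. Two subcases arise.
\begin{enumerate}[(a)]
\item If $T'$ is an interior child, then, as noted before the theorem, refinement of type $0$ or $3$ tetrahedra yields non-corner children of alternating types $2$ and $3$. Their faces are shared with siblings or lie inside parent faces only through tetrahedral siblings, and a type-$2$ tetrahedron never touches a pyramid. In either case $f(T)$ lies against tetrahedral siblings within the same tetrahedral ancestor. All descendants there are tetrahedra, a contradiction.
\item If $T'$ is a corner child but not touching $f$ of its parent, then face $f$ of $T'$ is interior to its parent and again faces a tetrahedral sibling, a contradiction.
\end{enumerate}
So (i) holds, and the reduction gives $f(T)\subseteq f(A)$. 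If face $f$ of $A$ were not valid, the element across it would be a tetrahedron, by the definition of a valid face and \tabref{tet-face}, or would lie across the tree boundary. Tetrahedra refine only into tetrahedra, so every neighbor of $T$ across $f$ would be tetrahedral, contradicting the assumption. Hence (ii) holds.

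The main obstacle is the necessity argument for non-corner children. The claim that every non-corner child of a type-$0$ or type-$3$ tetrahedron has its faces only against tetrahedral siblings, or against portions of the parent faces already excluded, must be checked explicitly against Bey's child-type and face tables, for example \tabref{child-face}. We expect to do this by a finite case inspection of the eight children of type-$0$ and type-$3$ tetrahedra.
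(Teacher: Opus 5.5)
\textbf{There is a genuine gap, in case (a) of your necessity argument.} You claim that the face $f$ of a non-corner child lies against tetrahedral siblings inside the same tetrahedral ancestor. That is false. In Bey's refinement each of the four interior children covers the middle subtriangle of exactly one face of its parent. So a non-corner descendant of $A$ can have its face $f$ on $f(A)$ itself, facing into the pyramid.

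A concrete case: let $A$ be the type-$3$ child with local id $1$ of a type-$6$ pyramid. Its face $f_0$ is valid, since it is shared with the upside-down type-$7$ sibling.
\begin{enumerate}
\item The middle subtriangle of $f_0(A)$ is face $f_0$ of a type-$2$ interior child of $A$.
\item The middle subtriangle of that face is face $f_0$ of a type-$3$ interior child $T$ of this type-$2$ tetrahedron.
\end{enumerate}
This $T$ violates (i), yet $f_0(T)\subseteq f_0(A)\subset\partial A$. Its same-level neighbor lies inside the type-$7$ pyramid, not in $A$.

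Relatedly, you misread the remark before the theorem. The non-corner children of a type-$0$ or type-$3$ tetrahedron have types $1,2,4,5$. The alternation of $2$ and $3$ describes exactly this middle-of-face chain across levels, $3\to 2\to 3\to\cdots$. So ``type $2$ never touches a pyramid'' says nothing about the type-$3$ members of that chain.

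The neighbor of such a $T$ is still a tetrahedron. But this holds only because, inside the pyramid, the middle subtriangle of every triangular face is covered by a tetrahedral child. That is a statement about the pyramid's refinement. No finite inspection of Bey's child and face tables can supply it, so your planned case check cannot close the gap.

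\textbf{How the paper closes this step.} It uses Lemma~\ref{lem:tet_corner}. If $T\neq A$ has a pyramid neighbor across $f$, then $T$ is a corner child. The corner at $x_f$ is excluded, because its face $f$ is interior to the parent. Hence $f(T)$ lies in face $f$ of the parent. The parent's neighbor across $f$ is the parent of $T$'s neighbor, and is therefore again a pyramid. Induction up to $A$ gives (i), and then (ii).

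\textbf{Alternative repair.} If you prefer not to rely on that lemma, prove the pyramid-side fact directly. A triangular face of a pyramid splits by midpoints into four subtriangles. The three corner subtriangles are covered by corner pyramidal children of the parent's type (Lemma~\ref{lem:pyra_corner}: five pyramids for five corners). The middle one is covered by a tetrahedral child. By induction, the subtriangles of a valid face $f(A)$ covered by pyramids, at any depth, are exactly the corner-chain ones. Both directions follow at once from this.

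This same fact is what your sufficiency step actually needs. There, the appeal to Lemma~\ref{lem:tet_corner} and ``symmetry'' does not justify that the element across $f(T)$ is the corresponding corner descendant of the pyramid. What does justify it is that $A$ and the pyramid induce the same midpoint subdivision of their shared triangle.
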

\begin{proof}[Proof of theorem \ref{Thm:tet_touches_pyra}]
	First, assume that there is a $T'$ not touching the face $f$ of its parent. Then no child of $T'$ can have contact with this face again, so
    $T$ does not touch face $f$ of $A$.
    As a result, $T$ cannot have a neighbor outside of $A$, meaning its neighbor has to be a tetrahedron, which is a contradiction.
	Secondly, assume that there is a $T$ touching face $f$ but not lying in the corner of its parent. $T$ has an ancestor in the shape of a tetrahedron and since all children are tetrahedra, the neighbor of $T$ is a tetrahedron which is a contradiction to Lemma \ref{lem:tet_corner}.
	Eventually assume that the face $f$ of A is not a valid face. But then, the neighbor along this face can not be a pyramid by the definition of a valid face, see Definition \ref{Def:valid_face}.

        For the other direction we assume that $T$ does not have a pyramid
neighbor along face $f$. Therefore the neighbor is a tetrahedron. If its
parent is a pyramid, the tetrahedron cannot lie in the corner of the pyramid
(by Lemma \ref{lem:pyra_corner}). Using this argument recursively we see
that there is either a $T'$ not lying in the corner of its parent, or there
is a $T'$ in the corner opposite of $f$, hence not touching it. Furthermore,
$f$ has to be a valid face by the definition of it.
\end{proof}
We can identify the children of a tetrahedron touching a given face by their
local indices, see \tabref{local-index}.
Since the face number of a tetrahedron is given by the number of the opposite corner,
we can use \tabref{child-face} to check whether a child is touching a given face.
For every tetrahedral ancestor, we check the local index: if it always refers to the same corner, the neighbor might be a pyramid.
The complexity of this check grows linearly with the level. To avoid it as often as possible, we first check if the given face is a valid face, which can be done in constant time via the position of the tetrahedron in its pyramidal parent, using \tabref{tet-face}.
At this point we have solved the first problem and provide Algorithm
\ref{Alg:tetPyra} to determine for an element of type $0$ or $3$ whether the
neighbor along the given face is a pyramid.

\begin{algorithm}
	\TitleOfAlgo{\texttt{t8\_dtet\_pyra\_touch}}
	\KwData{An element $T$ of type $0$ or type $3$ and a face number $f$}
    \setlength{\algoCommentLength}{5.5cm}
   \DontPrintSemicolon
    \small
    \If{\textnormal{$T$.\level = $T$.$\switchShapeLevel$}}
    {
        \myAlgoLine{Use type check of \tabref{tet_to_boundary}}
        {\Return \texttt{tet\_pyra\_connect($T, f$)}}
    }

    \myAlgoLine{Get lowest-level tetrahedral ancestor}
    {$A$ = \texttt{tet\_anc}$(T, T.\switchShapeLevel)$}

    \myAlgoLine{Check if $A$ touches $f$}
    {valid = \texttt{tet\_pyra\_touches\_face($A$, $f$)} }
	\If{\textnormal{valid}}
	{
        \myAlgoBlockHeader{If all ancestors of $T$ until $A$ touch $f$, then $T$ touches $f$}

        \myAlgoLine{Start tests with $T$}
        {type = $T$.\type}
		\For{\textnormal{$l$ = $T$.\level; $l$ > $A$.\level; $l${-}{-}}}
		{
            \myAlgoLine{Compute the cube id on level $l$}
			     { $c_{id}$ = \texttt{cube\_id}($T$, $l$)}
            \myAlgoLine{Get the local id}
			{id = \texttt{cube\_id\_to\_Beyes}(type, $c_{id}$)}
			\If{\textnormal{\textbf{not} \texttt{child\_touches\_face}($f$, id)}}
            {
            \myAlgoLine{All children have to touch face $f$}
			{\Return \texttt{false}}
            }
            \myAlgoLine{Type might vary each level}
            {type = \texttt{parent\_type}($c_{id}$, type)}
		}
    }
		\Return valid
    \vspace{0.5em}
	\caption{The algorithm to decide for a tetrahedron of type $0$ or $3$ whether the neighbor along a face $f$ of the root is a pyramid.
    \label{Alg:tetPyra}}
\end{algorithm}

\ourSubSubsection{Coordinates and type of the neighbor}
Next, we solve the second problem, i.e., computing the type and the anchor coordinate of the neighbor based on its shape. As above, the computation is separated into pyramidal and tetrahedral input elements.

In \tabref{pyra-neigh-type},
the types and shapes of a pyramid's neighbors are given for each face,
that is, tetrahedra of type $0$ or $3$ for the triangular faces $f_0$, $f_1$, $f_2$, $f_3$ and the opposite pyramid for the quadrilateral $f_4$.
To compute the coordinates of the neighbor, we consider two cases
which can be organized in one table each:
\enumerateCases{
	\item
    For a pyramid, the differences of the neighbor's anchor coordinates
    compared to the pyramid itself are given in
    \tabref{pyra-neigh-coord}.
    It shows that for tetrahedral neighbors, only the $x$- and $y$-coordinates differ, while the $z$-coordinate is the same;
    for the pyramidal neighbor at face $f_4$, in contrast, only the $z$-coordinate changes.
	\item For a tetrahedron with a pyramid as neighbor, the computation of the coordinates can be done in a similar fashion, see \tabref{tet-neigh}.
}

\begin{SCtable}
	\begin{tabular}{ccc}
		\toprule
		Face & pyramid $6$ & pyramid $7$ \\
		\midrule
		$f_0$ & $3$ & 3\\
		$f_1$ & $3$ & 3\\
		$f_2$ & $0$ & 0\\
		$f_3$ & $0$ & 0\\
		$f_4$ & $7$ & 6\\
		\bottomrule
	\end{tabular}%
    \caption{The type of the neighbor of a pyramid given a face of the
pyramid and its type.%
\tablab{pyra-neigh-type}}
\end{SCtable}%
\begin{table}
	\begin{tabular}{cccccc}
		\toprule
                Coordinate
		& $f_0$ & $f_1$ & $f_2$ & $f_3$ & $f_4$ \\
		\midrule
                $x$      & $0$ & $\plen ~||~ 0$ & $0$ & $0  ~||~ -\plen$ & $0$ \\
                $y$      & $0$ & $0  ~||~-\plen$ & $0$ & $\plen  ~||~ 0$ & $0$ \\
                $z$      & $0$ & $0$ & $0$ & $0$ & $-\plen  ~||~ \plen$ \\
		\bottomrule
	\end{tabular}%
        \caption{The manipulation of the coordinate of a pyramid to compute the
coordinates of the neighbor along a given face, with $\plen=2^{\mathcal{L}-l}$ the length of the face given its level $l$.  For each coordinate, the entry
in the table has to be added to the coordinate of the input pyramid.
We abbreviate pairs of assignment: $a ~||~ b$ means that $a$ is chosen if
$\ptype = 6$ and $b$ otherwise, i.e., meaning $\ptype = 7$.}
\tablab{pyra-neigh-coord}%
\end{table}%

\begin{table}
	\begin{tabular}{ccccc}
		\toprule
                Coordinate
		& $f_0$ & $f_1$ & $f_2$ & $f_3$ \\
		\midrule
		$x$ & $(\plen ~||~ 0)$ & $0$ & $0$ & $(0 ~||~ -\plen)$ \\
		$y$ & $(0 ~||~ \plen)$ & $0$ & $0$ & $(-\plen ~||~ 0)$ \\
		$z$ & $0$ & $0$ & $0$ & $0$\\
		\midrule
		Type & $7$ & $7$ & $6$ & $6$ \\
		\bottomrule
	\end{tabular}%
        \caption{The manipulation of the coordinate of a tetrahedron
neighboring a pyramid along a given face. For each coordinate the entry at the
table has to be added to the coordinate of the input
pyramid.
By $(a~||~b)$, we abbreviate the cases for type $0$ ($a$) or the other types ($b$).}%
\tablab{tet-neigh}%
\end{table}%

\ourSubSubsection{Dual face number}
Eventually, we determine the number of the dual face. If the neighbor of a tetrahedron is a tetrahedron too, we use the already existing algorithm for tetrahedra. If the input element is a pyramid, the dual face number depends on the pyramid type and the given face number. Given  the type of the input element and the face, the computation of the dual face can be done efficiently using \tabref{dual-face-num}.
\begin{SCtable}
	\begin{tabular}{cccccc}
		\toprule
                Type
		& $f_0$ & $f_1$ & $f_2$ & $f_3$ & $f_4$ \\
		\midrule
		$0$ & $f_3$ & $f_2$ & $f_2$ & $f_3$ & $-$ \\
		$3$ & $f_1$ & $f_0$ & $f_0$ & $f_1$ & $-$ \\
		$6$ & $f_2$ & $f_3$ & $f_2$ & $f_3$ & $f_4$ \\
		$7$ & $f_1$ & $f_0$ & $f_1$ & $f_0$ & $f_4$ \\
		\bottomrule
	\end{tabular}%
        \caption{The (dual) face number of the neighbor of an element given the input
type and a face.}%
\tablab{dual-face-num}%
\end{SCtable}%

If we run the costly check whether a tetrahedron has a pyramid neighbor last, we can avoid it as often as possible. All other checks run in constant time.

\subsection{Inter-tree neighbor elements}
\seclab{outside-root-neighbor}

This section addresses the computation of neighbors outside the root
pyramid. We follow the strategy used in \tetcode to compute neighbors over
the boundaries of a tree. It is based on the fact that two elements share a
face planar with the tree boundary.
This computation is split in two main parts:
\enumerateSteps{
\item Within the root pyramid, we have to be able to collapse an element to the face that touches the root's boundary.
\item On the other side, i.e., in the adjacent coarse-mesh element, we need the inverse of this process: Given a face on the boundary,
we want to construct the associated element in the root pyramid.
}

Note that other elements in the forest may have different maximal levels, which has to be considered when computing the neighbor over the face of the root element. For the sake of simplicity, however, we neglect this additional computation here, as it only affects the scaling of the computed coordinates.

\ourSubSubsection{Collapsing to a face}


In the following, we show how to collapse an element down to a given face,
starting with a pyramid:
For each face element, \tabref{pyra_to_boundary} lists its anchor coordinates and its type.
While the anchor coordinate of the quadrilateral face
matches that of the pyramid,
those of the four triangular faces are obtained through a coordinate permutation. The coordinates need to be scaled to the correct maximal level.

Note that (unlike prisms, pyramids, tetrahedra, and triangles)
quadrilateral elements do not have a type.
\begin{table}
	\begin{tabular}{cccccc}
		\toprule
                   Coordinate
		   & $f_0$ & $f_1$ & $f_2$ & $f_3$ & $f_4$ \\
		\midrule
		$T.x$ & $y$  & $y$ & $x$ & $x$ & $x$\\
		$T.y$ & $z$ & $z$ & $z$ & $z$ & $y$ \\
		\midrule
		Type & 0& 0&0 &0 & quad \\
		\bottomrule
	\end{tabular}
        \caption{The coordinates and type of a face of pyramid $P$ that touches
the root element with a given face number of $P$. For the faces $0$ to $3$ the
faces are triangles of type $0$; for face $4$, the face is a quadrilateral element, which does not have a type. The
coordinates $x$, $y$, and $z$ refer to the coordinates of the input element.}%
\tablab{pyra_to_boundary}%
\end{table}%
\begin{table}
	\begin{tabular}{ccccc}
		\toprule
        (Type, face) & $(0, f_1)$, $(2, f_2)$ & $(0, f_0)$, $(1, f_0)$ &
              $(3, f_1)$, $(1, f_2)$ & else \\
		\midrule
		$T.x$ & $y$ & $y$ & $x$ & $x$ \\
		$T.y$ & $z$ & $z$ & $z$ & $z$  \\
		\midrule
		Type of face & $(0 \pquest $1$ ~||~ 0)$ & $(0 \pquest $1$ ~||~ 0)$ &
                       $(3 \pquest $1$ ~||~ 0)$ & $(3 \pquest $1$ ~||~ 0)$ \\
		\bottomrule
	\end{tabular}
        \caption{The coordinates and type of a tetrahedron's face that touches
the root element. Both depend on the face and type of the element. All boundary
elements are triangles. We abbreviate the choice of type in the last row by $(t \pquest a ~||~ b)$, where
$a$ applies to type $t$ and $b$ otherwise.}%
\tablab{tet_to_boundary}%
\end{table}%
To instead collapse a tetrahedron
within the root pyramid,
we first have a look at the following example:
Given a pyramid that is refined once, we see that its tetrahedral children are between
two pyramids, and have type $0$ or $3$. Since the boundary surfaces of the pyramidal children
have type $0$, the boundary surfaces of a tetrahedral child have type $1$. If we refine
this tetrahedron again, another type of tetrahedron arises in
the middle. If we refine further, we observe that the type of the tetrahedra in the middle of the face alternate between two different types. Depending on the given face, the new arising type is $2$ or
$1$. These elements touch the root with a triangular face of type $0$.

In conclusion, if
the tetrahedron touching the root face has type $0$ or $3$, the triangle has type $1$, otherwise it has
type $0$, as listed in \tabref{tet_to_boundary}.

\ourSubSubsection{Construct pyramid from face}
Next, we discuss the inverse operation:
construct a volume element that touches a given surface element on a face of the root pyramid.
Since tetrahedra have already been introduced in previous work, this section is restricted to constructing pyramidal elements.

Determining the pyramid's anchor coordinate differs from face to face;
\tabref{boundary_to_pyra} provides an overview.
\begin{itemize}
    \item From a quadrilateral face element we can directly compute the pyramid. The anchor coordinates are given by those of the quadrilateral extended by a $z$-coordinate of zero; in the end, we scale the coordinates to the length of the root pyramid.
    \item For all triangular faces,
the $z$-coordinate of the pyramid is given by the $y$-coordinate of the face;
setting its $x$- and $y$-coordinates
is only slightly more involved:
as detailed in \tabref{boundary_to_pyra},
they are either given directly from the face's $x$- and $y$-coordinates
or
have to be computed as
the difference in length
between the root element and a pyramid of the same level as the triangle.
\end{itemize}



\begin{SCfigure}
	\resizebox{40mm}{!}{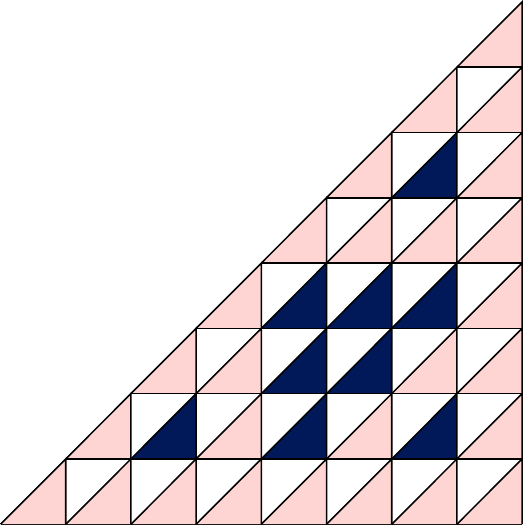}
	\caption{A triangular face of a pyramid that has been refined $3$ times. All white triangles have type $1$ and we know directly that they are extruded to a tetrahedron. We marked all type $0$ triangles that expand to a pyramid in light pink. All blue type $0$ triangles extrude to a tetrahedron.\label{fig:tri_to_pyra}}
\end{SCfigure}
In addition to the anchor coordinates, we need to compute the type of the pyramid.
We know for all triangles of type $1$ that the extruded element is a tetrahedron of type $0$ or type $3$ (depending on the given root face).
Accordingly,
only
if the triangle has type $0$, we need to check whether the extruded element is a pyramid or a tetrahedron, see Figure \ref{fig:tri_to_pyra} for an visualization of the extrusion of different triangles.
The element that a type-$0$ face extrudes to is a tetrahedron
if and only
if it has a tetrahedron as a parent; otherwise, it is a pyramid. In Theorem \ref{Thm:tri_to_pyra}, we show how to check the property on a binary level.
\begin{theorem}
        Let $T$ be a triangle of type $0$ on the surface of the root pyramid. The
boundary element with $T$ as a face is a pyramid if and only if
\begin{equation}
  T \rightarrow y = T \rightarrow x \bitwand T \rightarrow y ,
  \label{Thm:tri_to_pyra}
\end{equation}
 with ``$\bitwand$'' referring to a bitwise \textnormal{AND}.
\end{theorem}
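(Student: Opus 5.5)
We prove the characterization by induction on the level $l$ of the triangle $T$, tracking the two-dimensional refinement pattern that the pyramidal refinement induces on a triangular root face. By symmetry of the four triangular faces (the root is of type $6$, and \tabref{pyra_to_boundary} and \tabref{boundary_to_pyra} only permute or reflect coordinates), it suffices to treat one face, say $f_2$. There $T\rightarrow x$ is the pyramid's $x$-coordinate and $T\rightarrow y$ is its $z$-coordinate. We view the face as the reference triangle of type $0$ in a square of side $2^{\mathcal L}$, with $T\rightarrow y \le T\rightarrow x$ inside it. The condition $T\rightarrow y = T\rightarrow x \bitwand T\rightarrow y$ says bitwise that $y_k=1 \Rightarrow x_k=1$ for every bit $k$. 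Equivalently, at every refinement level the pair of new bits $(x_k,y_k)$ lies in $\{(0,0),(1,0),(1,1)\}$ and never equals $(0,1)$.

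The first step is to describe how a pyramid's triangular face refines, using Tables \ref{tab:coord-shift} and \ref{tab:child-type} restricted to the boundary face. A type-$0$ triangle of a pyramid refines into four triangles. The three corner triangles have type $0$ and are faces of corner pyramids; by Lemma \ref{lem:pyra_corner} these are exactly the pyramidal children touching the face. The middle triangle has type $1$ and is the face of a tetrahedron of type $0$ or $3$. Reading off the anchor shifts, the three corner triangles have bit pairs $(0,0)$, $(1,0)$, $(1,1)$, the lower-left, lower-right and top corners. The middle type-$1$ triangle sits in the sub-square with shift $(1,0)$ or $(0,1)$; I expect the top-corner anchor to be $(1,1)$ and the middle triangle to carry the remaining pair. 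The crucial point is that within a triangle of type $0$, the only sub-square with pair $(0,1)$ lies above the diagonal, so it can contain only children of type $1$ or their descendants.

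The second step handles descendants of tetrahedra. Once a tetrahedron appears, all further descendants are tetrahedra. So a boundary element is a pyramid if and only if every ancestor triangle, equivalently every ancestor volume element, is a pyramid, which by the first step means that at every level the chosen child is a corner child of a type-$0$ triangle. This matches the paper's remark that a type-$0$ face extrudes to a tetrahedron iff it has a tetrahedral ancestor. We then need the converse on the bit level: every descendant of a type-$1$ triangle, and every type-$0$ triangle arising inside it (the blue triangles of Figure \ref{fig:tri_to_pyra}), has the pair $(0,1)$ at the level where the first type-$1$ ancestor was created, or at some later level. Using the Bey-type triangle refinement tables from \cite{BursteddeHolke16}, I would show that a type-$1$ triangle's descendants retain at least one level $k$ with $(x_k,y_k)=(0,1)$. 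The sub-triangle at which the tetrahedral branch started is exactly the one whose anchor bit pair is $(0,1)$, and later bits cannot erase an earlier bit.

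Combining the two steps by induction on the level gives both directions. If every bit pair avoids $(0,1)$, then each step selected a corner child of a type-$0$ triangle, so the element is a pyramid. Conversely, if some pair equals $(0,1)$, the branch left the pyramidal corners at that level, so the element is a tetrahedron. The main obstacle is the bookkeeping in the first step: confirming from Table \ref{tab:coord-shift} that the middle (type-$1$) child, and not the top corner child, is the one with bit pair $(0,1)$, and that this holds on all four faces after the coordinate permutation and reflection of \tabref{pyra_to_boundary}. In particular, faces $f_1$ and $f_3$ involve the ``$\plen$'' differences, which may swap the roles of bits $0$ and $1$. I would check these cases explicitly against Figure \ref{fig:tri_to_pyra}, and reflect coordinates where necessary so that the same bitwise criterion emerges.
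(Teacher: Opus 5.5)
There is a genuine gap, and it is exactly at the point you flagged as the main obstacle: the expectation in your first step is false. Take a type-$0$ triangle with anchor $(0,0)$ and side $2h$, i.e.\ the region $0\le y\le x\le 2h$. The sub-square with bit pair $(0,1)$ is $[0,h]\times[h,2h]$, which meets this triangle only in the point $(h,h)$, so it contains no child at all. The middle type-$1$ child has vertices $(h,0)$, $(h,h)$, $(2h,h)$ and anchor pair $(1,0)$. It shares that sub-square with the lower-right corner child; on face $f_2$ it is the face of child $1$ in \tabref{child-type}, the type-$3$ tetrahedron anchored at $\vec{h}_x$.

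So the tetrahedral branch does not begin with a $(0,1)$ bit. Your claim that descendants of a type-$1$ triangle always retain a $(0,1)$ pair is also false: the middle child itself, and its type-$1$ corner children with pairs $(0,0)$ and $(1,1)$, satisfy the bit condition although they are faces of tetrahedra. Hence the step ``if every bit pair avoids $(0,1)$, then each step selected a corner child of a type-$0$ triangle'' does not follow. Your argument never uses that $T$ itself has type $0$, and without that hypothesis the implication is simply wrong.

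The missing idea is the one the paper's proof rests on: ``the type-$1$ child of a triangle fulfills the equation too, but its type-$0$ child at the next level does not.'' A type-$1$ triangle refines into three type-$1$ corner children with pairs $(0,0)$, $(0,1)$, $(1,1)$ and a single type-$0$ middle child with pair $(0,1)$. So if $T$ has type $0$ and some ancestor of type $1$, take the deepest type-$1$ ancestor. Its child on the path to $T$ is such a middle child and produces a $(0,1)$ pair. Contrapositively, a type-$0$ triangle satisfying the condition has only type-$0$ ancestors. Each step is then a type-$0$-to-type-$0$ transition, i.e.\ a corner child, and by Lemma~\ref{lem:pyra_corner} and induction the extruded element is a pyramid.

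Your other direction (pyramid $\Rightarrow$ all ancestors are corner children with pairs in $\{(0,0),(1,0),(1,1)\}$ $\Rightarrow$ condition) is correct and is the same induction as in the paper. Only the converse needs to be rebuilt around the type-$0$ hypothesis as above.
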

\begin{proof}[Proof of Theorem \ref{Thm:tri_to_pyra}]
	Let $T$ be a triangle of type $0$ on the surface of the root pyramid
    that touches a pyramid as an inter-tree neighbor.
    As pyramids touching a face of its parent only occur in the corners, all the ancestors of $T$ lie in a corner of their parent. The statement holds for triangles of level $0$ and we use an inductive argument to show the first direction:

	Assume it holds for the first $l$ bits that $T\rightarrow y = T\rightarrow x \,\& \, T\rightarrow y$. In the corners of a triangle of type $0$ are the children with local id $0$, $2$, and $3$.
    \begin{itemize}
    \item
    The child with local id $0$ does not change the coordinates.
    \item
    The child with local id $2$ flips the ($l+1$)-th bit to $1$ for the $x$-coordinate, while the $y$-coordinate remains unchanged. Thus the equation still holds.
    \item
    For the child with local id $3$, both the $x$- and the $y$-coordinate are flipped to $1$ at the $(l+1)$-th bit and the equation still holds.
    \end{itemize}
	Next, let T be a triangle of type $0$ which fulfills the equation. Then every bit of the coordinates has to fulfill the equation. Each bit describes the position of a triangle in its parent. The combinations of bits of $x$- and $y$-coordinates fulfilling the equation are $(0,0), (1,0)$, and $(1,1)$. The type-$1$ child of a triangle fulfills the equation too, but its type-$0$ child at the next level does not. Thus, at every bit the coordinates refer to a corner triangle. Therefore, the extruded element with face $T$ has the shape of a pyramid.
\end{proof}

The type of the neighboring tetrahedron depends on the type of the triangle and the number of the root face and can be computed via a look up \tabref{boundary_to_tet_type}.
At last, we need to compute the number of the face of the pyramid touching the root face. If the shape is a pyramid, it is the number of the given root face. Otherwise it depends again on the type of the triangular element and the given root face, and can be computed via a look up \tabref{face_num_root}.

\begin{table}
	\begin{tabular}{cccccc}
		\toprule
                Coordinate
		& $f_0$ & $f_1$ & $f_2$ & $f_3$ & $f_4$ \\
		\midrule
		$P.x$ & $y$ & $2^{L} - 2^{T\rightarrow\text{level}}$ & $x$& $x$ & $x$\\
		$P.y$ & $x$ & $x$& $y$& $2^{L} - 2^{T\rightarrow\text{level}}$ & $y$\\
		$P.z$ & $y$ & $y$& $y$& $y$ & 0\\
		\bottomrule
	\end{tabular}
        \caption{The coordinates of an extruded element on the face of the
root pyramid. We notice that on face $4$ the elements are quadrilaterals, the
elements on all other faces are triangles. We refer with $x$ and $y$ to the
coordinates of the triangle $T$ on the face of the root element.}%
\tablab{boundary_to_pyra}%
\end{table}%


\begin{table}
\begin{subtable}{.48\textwidth}
\centering
\begin{tabular}{ccccc}
        \toprule
         Type
         & $f_0$ & $f_1$ & $f_2$ & $f_3$ \\
        \midrule
        $0$ & $1$ & $1$ & $2$ & $0$ \\
        $1$ & $0$ & $0$ & $3$ & $3$ \\
        \bottomrule
    \end{tabular}%
    \caption{The type of the tetrahedron
    touching the root pyramid, given a root face.}
    \tablab{boundary_to_tet_type}
\end{subtable}
\begin{subtable}{.48\textwidth}
\centering
    \begin{tabular}{ccccc}
						\toprule
                                                   Type
						   & $f_0$ & $f_1$ & $f_2$ & $f_3$ \\
						\midrule
						$0$ & $2$ & $0$ & $2$ & $0$ \\
						$1$ & $1$ & $0$ & $1$ & $0$ \\
						\bottomrule
\end{tabular}%
\caption{The type of the constructed tetrahedron touching the root face.}%
\tablab{face_num_root}
\end{subtable}
\caption{Relations between the faces of the root pyramid and tetrahedra.}
\end{table}

\section{Algorithms at the forest level}
\seclab{highlevel}

Global, high-level procedures for mesh manipulation, information gathering,
or searching can be implemented independently of the element-specific
routines.
The forest-level view allows general processing and communication of mesh
metadata by simply calling the necessary element-specific virtual functions.
In \secref{Sec:resultcube} and \secref{Sec:resultplane}, we will investigate
how the pyramid-specific functions influence the performance and scalability
of a set of exemplary forest-level algorithms:
\begin{itemize}
    \item \tetnew creates a uniformly refined forest of a given depth that
is partitioned equally over all available processes.
    \item \tetadapt evaluates a user-provided criterion to decide whether an element should be refined, stay as it is, or whether a family of elements should be coarsened into their parent element.
    \item \tetpartition repartitions a forest; based on the number of
elements in the mesh, the boundaries for each process are computed and the
elements (and possible data connected to them) are communicated to
reestablish an equal-weight partition over all processes.
    \item \tetghost determines the remote face-neighbor elements of every
process and gathers them in a symmetric point-to-point communication scheme.
\end{itemize}

When generating an initial forest, the first task is to compute a
partitioned uniform mesh via \tetnew. This involves determining the assignment of coarse
mesh elements (trees) based on a target refinement level, ensuring a
balanced element count per process after the initial uniform refinement.

For standard isotropic refinement, the number of elements in a uniformly
refined tree at level $\ell$ is determined by the refinement factor (e.g.,
$N(\ell) = 2^{\text{dim}\cdot\ell}$).
If all element types share the same refinement ratio, partitioning is
straightforward, requiring only the total element count.

However, the pyramid refinement introduced here transcends this simple
assumption.
Pyramidal descendants use different refinement factors (eight or ten),
depending on whether they are a pyramid or tetrahedron, leading to the
following element total for an $\ell$-times refined root pyramid,
\begin{equation}
  N(\ell) = 2 \cdot 8^\ell - 6^\ell ,
\end{equation}
see \ref{Number of pyramids} for a quick derivation.
Furthermore,
mixing pyramids with other element shapes means not all trees refine in the
same manner. Consequently, achieving a fair initial load assignment requires
a non-trivial algorithm, which we detail in \secref{forestnew}.

%

Considering future developments in adaptive refinement, a generalized methodology extending beyond the pyramidal case is beneficial for the \tetnew algorithm. This generality will support advanced techniques, such as the mixing of anisotropic and isotropic refinement, e.g., to refine prismatic elements within boundary layers preferentially along a single axis.

We rely on the following properties to compute the distribution of elements and trees of a forest:

\begin{enumerate}
  \item Each process maintains a consecutive range of local trees.
        Its first and last local tree may be shared with other processes.
        The range may also be empty for a process, in which case its first local tree is the first local tree of the next non-empty process; in this case, the last local tree id is $\texttt{first\_local\_tree\_id} - 1$, as it has no
        valid first or last local tree.
        A process stores the shape of its local trees' root elements.
  \item We maintain a global replicated array $\fK[0, P]$ that contains the
        cumulative number of trees before any process $p \in [0, P)$, thus
        $\fK_0 = 0$, $\fK_P = K$.
  \item In order not to count any tree twice in $\fK$, we define one
        responsible process $p_k$ for each tree $k \in [0, K-1]$.
        It is the lowest-id process that shares tree $k$.
        Empty processes can never be responsible for any tree
        (a difference to \cite{Burstedde20d}).
\end{enumerate}

\subsection{Forest partitioning with multiple shapes}
\seclab{forestnew}

The construction of a new forest takes a coarse mesh and an
initial uniform refinement level $\ell$ as parameters.
The goal of this algorithm is to build an equally partitioned forest of
elements of level $\ell$ exclusively.
On input,
the coarse mesh may be distributed in an arbitrary way.
Since every process only creates elements of the forest in its
local trees, we conclude that the input coarse mesh is in general not
partitioned to allow for a fair distribution of elements.
Thus, the first part of the forest construction is to compute the boundaries of each process, which are then used to fill the forest with its leaf elements.

For each shape of root elements, we can compute the number of elements in the associated tree on level $\ell$ in $O(1)$, i.e., constant runtime.
This number, however, might vary for each tree.
Therefore, we let each process responsible for one or more trees compute the
number of elements $N(\ell)$ each tree will hold at level $l$ and store it for
later reference.
This local number of elements is communicated to all processes via an
\texttt{MPI\_Scan}, from which each process derives a local copy of the
non-descending array $\fC[0, P]$;
for each process $p$, it holds the current element offset, i.e.,
the global index of the first element within the trees process $p$ is responsible for.
By construction, its final entry is the total number of elements:
$\fC_0 = 0$, $\fC_P = N$.
%

After repartitioning,
the $N$ elements shall be equally distributed among all $P$ processes.
In the resulting partitioning, we can determine the element range of process $p$ by the familiar formula
\begin{equation}
  \eqnlab{partitioncut}%
  O_p = \floors{\frac{p N}P}, \qquad N_p = O_{p + 1} - O_p,
\end{equation}
where $O_0, \ldots, O_P$ are the ideal element offsets
and the value of $N_p$ differs by at most one among all processes.
The other way around,
the element with global index $E$
will belong to the process
\begin{equation}
	p(E) = P - 1 - \floors{\frac{P(N - 1 - E)}{N}} .
	\eqnlab{elementtoproc}
\end{equation}

To achieve an equal distribution, the current tree- and element boundaries have to be shifted.
To perform this shift, we determine the range of processes each process needs to send information about its trees to.
Based on $\fC$ we can compute the global indices $T$ of the first elements of each tree of process $p$ locally.
We use \eqnref{elementtoproc} to compute the processes $0 \le q_0 \le q_1 < P$ which need the boundary information of  $n_0 = T[0]$ and $n_1 = T[K_p]$, where $K_p$ is the number of trees on process $p$. We use a binary search to find the processes $q_0$ and $q_1$.
The processes in the range $[q_0, q_1]$ need information from process $p$ about its boundaries. Each of them will thus receive at least one message from process $p$.


The overall algorithm takes the following steps:
\begin{enumerate}
	\item Compute the process-local number of elements.
	\item Use a prefix reduction to produce an offset array $\fC$.
	\item Produce an array $T$, where entry $T_i$ is the global index of the first element of local tree $i$.
	\item For each local tree $i$, compute the smallest process number requiring information about its elements (via a binary search in $T$).
    \item Each process computes the up to two processes it retrieves information from.
	\item Send tree- and element boundaries to each process for which we have information.
\end{enumerate}
The details of the algorithm are presented in Algorithm \ref{Alg:uniform_bounds}.
The performance of the algorithm, among others, is discussed in
\secref{Sec:resultcube} and \secref{Sec:resultplane}.




\begin{algorithm}
	\TitleOfAlgo{\texttt{uniform\_bounds}}
        \KwData{The calling process $p$ computes the boundary information of
a uniform partition based on the tree and elements it owns
and sends the information to processes that need the information.
It also receives this information.}
    \setlength{\algoCommentLength}{5.5cm}
   \DontPrintSemicolon
    \small
    \myAlgoLine{Global index of the first element of proc $p$}
    {index = \texttt{global\_index}$(\texttt{first\_tree}(p))$}
    \myAlgoLine{Array of offsets, use \texttt{MPI\_Scan}}
    {$\fC[p]$ = index}
    \For{\textnormal{$i=0,\ldots,K_p$}} {
    \myAlgoLine{Array of global ids of first tree elements}
    {$T[i] = \texttt{first\_element}(\texttt{tree}(i))$}
    }
    \myAlgoLine{Use \eqnref{elementtoproc}, the first process to send to}
    {$q_0 = \texttt{proc\_to\_send}(T[0])$}
    \myAlgoLine{Use \eqnref{elementtoproc}, the last process to send to}
    {$q_1 = \texttt{proc\_to\_send}(T[K_p])$}
    \myAlgoLine{At q, the index of its $1^{st}$ tree in T}
    {proc\_to\_tree = $\texttt{bsearch}(T, q_0, q_1)$}
    \For{$q = q_0, \ldots, q_1$}
    {
        \myAlgoBlockHeader{Compute the first tree $FT$ and last tree $LT$ to send to $q$}

        \myAlgoLine{First and last element of $q$, use \eqnref{partitioncut}}
        {$FE_q$, $LE_q$, to be filled. }

        \If{\texttt{is\_empty}($q$)}
        {
            \myAlgoLine{$FT_q$ will be sent to $q$}
            {$FT_q = FT_{\text{next-non-empty-proc}}$}
        }
        \myAlgoLine{Possible first tree of $q$}
        {first\_tree = proc\_to\_tree[$q$]}
        \myAlgoLine{Possible first element of $q$}
        {$FE$ = $T[\mathrm{first\_tree}]$}
        \uIf{$FE \geq T[K_p]$}{
            \myAlgoLine{Proc $p$ has no information for $q$}
            {skip to the next process}
        }
        \Else{
            \myAlgoBlockHeader{Check if the calling process can send the lower or upper bound to $q$}
            \uIf{$FE$ > $FE_q$}
            {
                \myAlgoLine{The lower-bound info is on another proc}
                {continue}
            }
            \Else{
                \myAlgoLine{Shift by -1 if $FE$ is on shared tree}
                {$FT_q = \texttt{shift}(\mathrm{first\_tree}$)}
                \myAlgoLine{Tree-local id of the first element of $q$}
                {$FE_q = O_q - T[FT_q]$}
            }
            \uIf{$T(N) < LE_q$}{
                \myAlgoLine{The upper bound info is on another proc}
                {continue}
            }
            \Else{
                \myAlgoLine{Set the last tree of $q$}
                {$LT_q = \mathrm{proc\_to\_tree}[q + 1] - 1$}
                \myAlgoLine{Tree-local id of the last element of $q$}
                {$LE_q = O_{q+1} - T[LT_q] - 1$}
            }

            }
        \myAlgoLine{Non-blocking send}
        {Send $FT_q$ and $LT_q$ to $q$}

    }
    \myAlgoLine{Search procs that send to calling proc}
    {$q_{\mathrm{recv\_first}}, q_{\mathrm{recv\_last}} = \texttt{binary\_search}(\fC)$}

    \myAlgoLine{Receive boundary information}
    {$\texttt{recv}(q_{\mathrm{recv\_first}}, q_{\mathrm{recv\_last}})$}

    \caption{Compute uniform bounds for an initial forest partition.}%
    \label{Alg:uniform_bounds}%
\end{algorithm}

\section{Element performance}
\seclab{Sec:resultcube}

In this set of tests we compare the performance of our pyramid
implementation with the existing implementations of space-filling curves for
standard elements in three dimensions.

To ensure a fair comparison between the different types of elements, we make
sure that the number of elements created is roughly the same.
For all types we create a uniform mesh consisting of several trees. It is
then refined once and partitioned, before a ghost layer is created. The
following setups are used for the tests:
\begin{itemize}
	\item For pyramids we refine all elements of type $0$, $2$, $4$, and $6$. The initial mesh has eight trees and we start with a uniform mesh on level $7$, $8$, $9$, or $10$.
	\item For all other element types we refine every second element. The initial mesh has two trees and the uniform mesh has level $8$, $9$, $10$, or $11$.
\end{itemize}

We run three tests, each doubling the number of processes. That way, we can compare weak and strong scaling for each type of element. The results of this test series are summarized and visualized in Figures \ref{fig:Adapt_performance}, \ref{fig:Partition_Performance}, and \ref{fig:New_Performance}. For the smallest examples we create $143\times 10 ^6$ (pyramids) or $150\times 10^6$ (other) elements in total, resulting in at least $2\times 10^6$ elements per process.
The largest benchmark creates $75\times 10^9$ or $77\times 10^9$ elements,
respectively.

All experiments have been executed on CARA, a high-performance cluster run by the German Aerospace Center (DLR). It has $2168$ nodes with two AMD EPYC 7601 ($32$ cores, $2.2$ GHz) each. Each node has access to $128$ GB DDR4 ($2666$ MHz) RAM. An Infiniband HDR network connects the nodes \cite{CARA}.

For all core algorithms of \tetcode we observe a similar scaling behavior
between pyramids and the other element shapes.
Especially for the \tetpartition and the \tetadapt algorithms we observe
nearly ideal strong and weak scaling.

\begin{figure}
    \centering
    \includegraphics[scale=0.5,trim={1.5cm 0 2.5cm 0},clip]{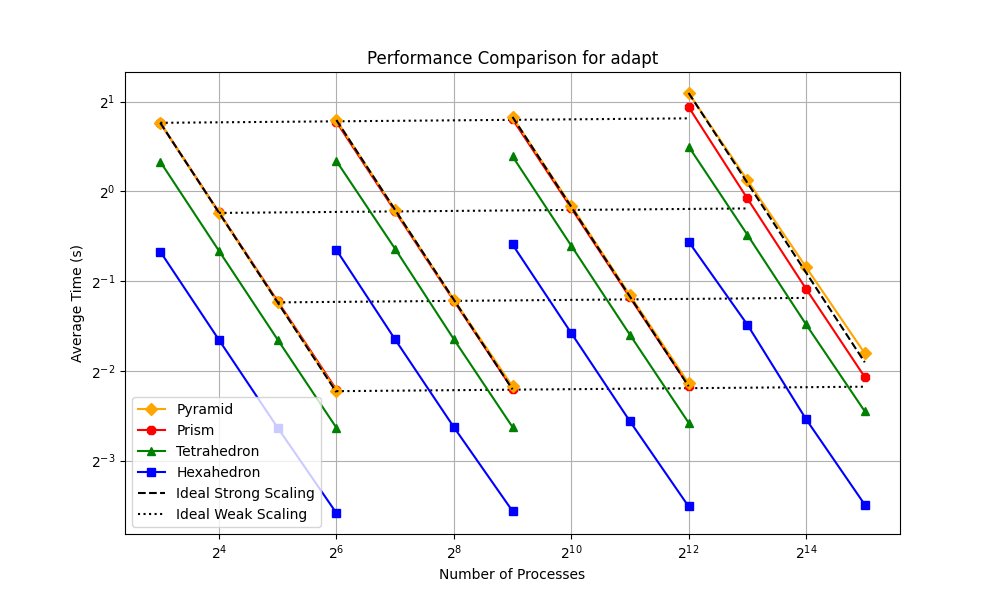}
    \caption{\label{fig:Adapt_performance}The runtimes of the \tetadapt algorithm
executed for four different problem sizes.
We demonstrate near ideal strong and weak scaling for all types of elements.}
\end{figure}

\begin{figure}
	\includegraphics[scale=0.5,trim={1.5cm 0 2.5cm 0},clip]{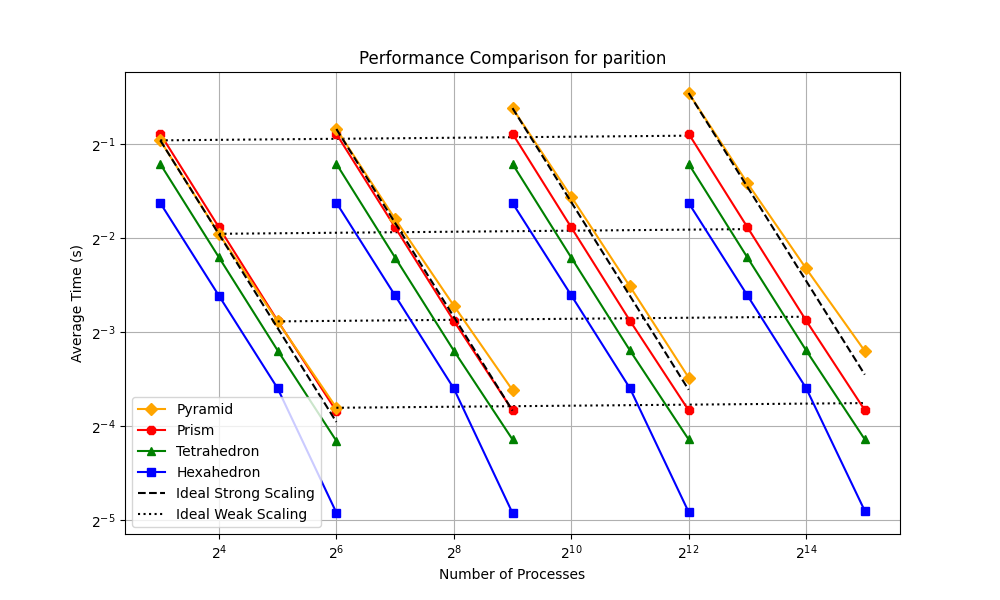}
	
        \caption{An evaluation of strong and weak scaling of the
\tetpartition algorithm of \tetcode for various element shapes.
The configuration is noted below in Figure \ref{fig:New_Performance}.%
}%
\label{fig:Partition_Performance}
\end{figure}

For the \tetnew algorithm the strong scaling is nearly ideal too; however,
we observe a small jump in the runtime for the two largest configurations
for the pyramidal algorithms.
We suspect to have reached a near worst-case scenario for the boundaries in
the pyramidal case.
It results in a more costly communication pattern, increasing the runtime.

\begin{figure}
    \centering
	\includegraphics[scale=0.5,trim={1.5cm 0 2.5cm 0},clip]{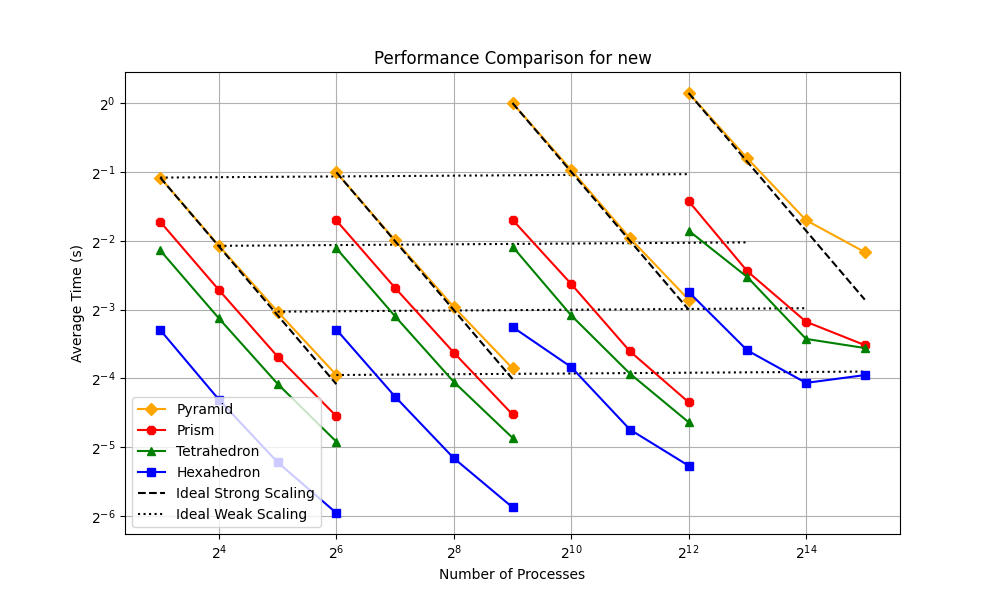}	
        \caption{A performance evaluation of
the \tetnew algorithm of \tetcode (Algorithm~\ref{Alg:uniform_bounds}).
We create a forest with a small number of trees and prescribe a uniform
refinement of at least level 7.
For each level we double the number of processes.
After doubling the number of processes three times we increase the level of
the uniform forest once.
The ideal scaling for pyramids is not constant, because the number of
pyramids in a mesh increases by a factor of greater than 8 on average
when uniformly refining.}%
\label{fig:New_Performance}%
\end{figure}


In general, the algorithms for pyramids are a bit more costly, which is
caused by the increased complexity of the calculations at the element level.
However, for forest-level algorithms that do not prompt weighty per-element
computation, the runtime is similar to other non-hexahedral elements.
Hexahedra run the fastest, because the element-level algorithms do not have
to consider different element types or type-dependent branching.
In fact, the hexahedral element routines are the only code linked in from
\pforest, wrapped in \tetcode's virtual function interface, but we would
expect a native \tetcode implementation to be of similar speed.
All in all, we preserve the near ideal scaling behavior of \tetcode for all
types of elements, including the pyramids introduced in this work.

The runtime analysis of the \tetghost algorithm \cite{HolkeKnappBurstedde19}
is complex:
The determination of ghost elements involves a depth-first search,
leading to an added logarithmic factor in terms of the elements per process.
On the other hand, the size of the ghost layer itself, which represents a
surface area, grows with an exponent of $2/3$ compared to the element count
of a process.
As the focus of this work is not the \tetghost algorithm, we omit a detailed
runtime discussion. Nonetheless, we present the runtime at a glance in \ref{tab:Table_Ghost_efficiency} and \ref{tab:Table_weak_efficiency}, demonstrating that the
inclusion of pyramidal elements performs comparably to existing elements and
does not introduce additional computational overhead within an application.


\section{Hybrid mesh example}
\seclab{Sec:resultplane}

In this section, we use a hybrid mesh to demonstrate the performance and
scalability of Algorithm \ref{Alg:uniform_bounds}, the \tetnew algorithm
introduced in \secref{forestnew}.
The coarse mesh is depicted in \figref{tonne_mesh}; it consists of around
100,000 trees, containing all 3D element shapes supported by \tetcode.
Initially, the mesh is refined five times.
During adaptation, all elements inside a virtual skewed wall are refined
twice.
After partitioning and the computation of the ghost layer, the wall is moved
and we adaptively refine again.
In total, the wall is moved five times and we construct a refined mesh
consisting of up to $40 \times 10^9$ elements.

\begin{figure}
	\includegraphics[scale=0.15]{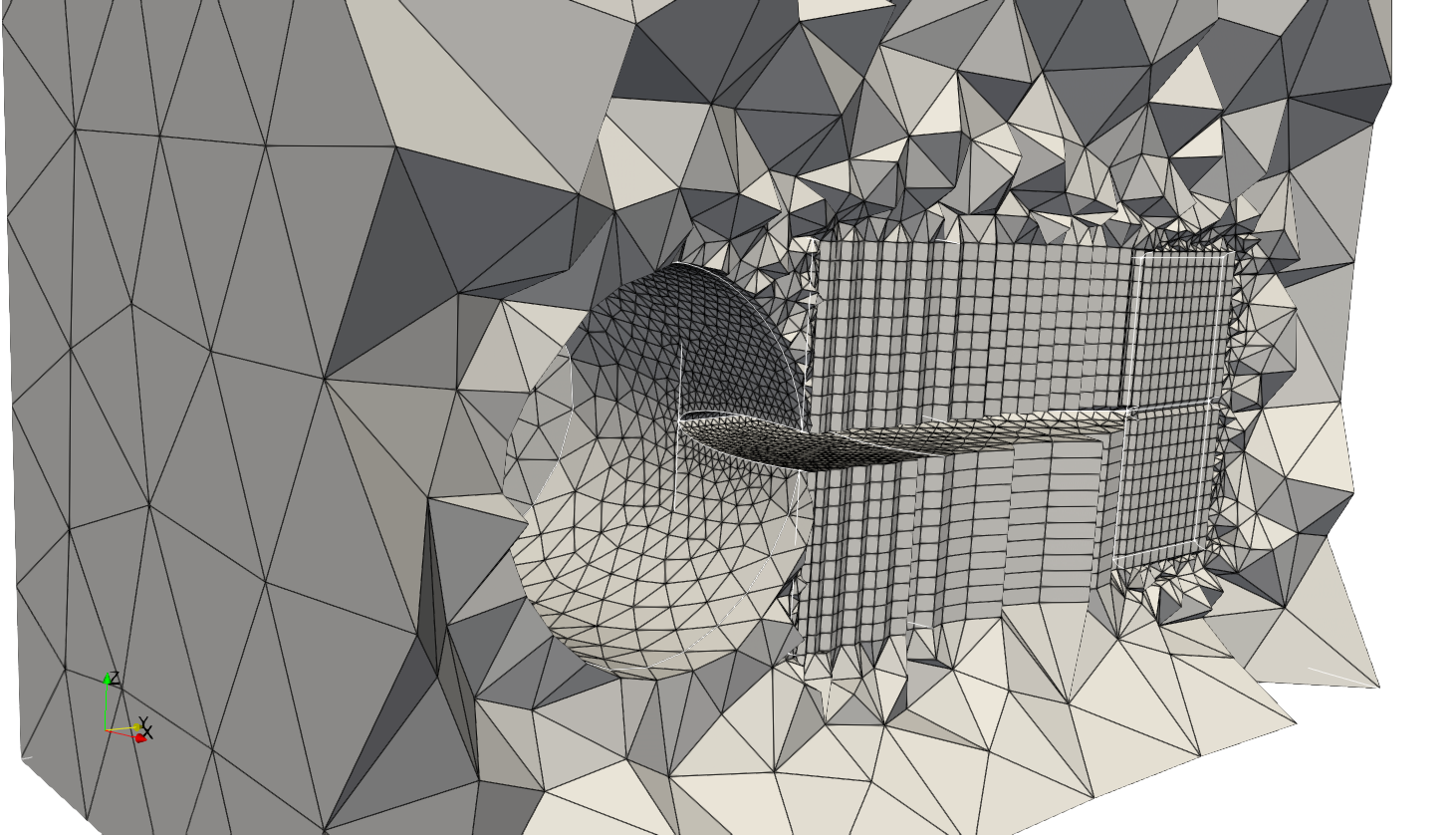}
        \caption{The mesh used for the benchmarks in \secref{Sec:resultplane}.
It is a toy example of an airplane geometry with a single wing.
The mesh consists of 69,431 tetrahedra, 3,800 hexahedra, 29,520 prisms and
3,120 pyramids.
The mesh can be found in \cite{tetdata25}.
The elements near the ``foil'' are extruded prisms, the far field consist of
tetrahedra.
At the boundary, hexahedra and pyramids occur, resulting in a suitable
hybrid CFD mesh%
.}%
\label{fig:tonne_mesh}%
\end{figure}

The test has been performed on CARO \cite{CARO}, a high performance cluster
run by the German Aerspace Center.
It is the sister system of CARA, which was in downtime and therefore not
available for this test. CARO consists of 1,276 nodes with two AMD EPYC
7702 (64 cores, $2.0$ GHz) each.
Each node we used has access to 256 GB DDR4 (3200 MHz) RAM. The nodes
are connected via an Infiniband HDR network.

\begin{table}
    \begin{tabular}{ccccc}
        \toprule
         Processes
         &  $2^{14}$ & $2^{15}$ & $2^{16}$ & $2^{17}$\\
        \midrule
        Total runtime / s   & $250.262$& $126.602$ & $69.943$ &   $49.942$ \\
        Adapt runtime / s & $224.099$ & $111.728$ & $57.835$ & $36.083$   \\
        Adapt / Total & $89.54\%$ & $88.65\%$ & $83.81\%$  & $72.27\%$   \\
        \bottomrule
    \end{tabular}
        \caption{The total runtime of the hybrid
mesh example. Between $72\,\%$ and $90\,\%$ of this runtime is spent in
\tetadapt. The runtime of the algorithm depends on the cost of the
adaptation criterion. In comparison to the criterion used in
\secref{Sec:resultcube}, this geometric criterion to refine in the
vicinity of a moving virtual wall is more costly.}%
\label{tab:hybrid_mesh}%
\end{table}

Table~\ref{tab:hybrid_mesh} illustrates the results of the experiments: The
total runtime of each test is shown for up to 131,072 processes. Most of the
time is used to evaluate the adaptation criterion, which is an expensive one
in this case. Each element has to test whether it is inside the wall, or
not.

The runtime of the other core algorithms of \tetcode are negligible in
comparison, because they do not perform any geometry evaluation.
Here it suffices to gather information about the elements in the reference
space.


For the first three runs, each process contains more than a million
elements, and a near ideal strong scaling can be observed. For the last run
($2^{17}$ processes), the scaling is slightly less than ideal, since the
number of elements drops down to roughly 300,000 elements per process and
the communication overhead becomes noticeable.

\section{Conclusion}
\seclab{conclusion}

In this work, we have successfully introduced and formalized a novel
space-filling curve for pyramidal elements.
We demonstrated, through a rigorous reference implementation, how to
integrate pyramids into tree-based, hybrid adaptive mesh-refinement
software.
This advancement is significant because it enables pyramids to function
effectively as a bridge element within scalable tree-based hybrid meshes,
significantly increasing the geometric flexibility and robustness of meshes
used for complex simulations.
%
%
Furthermore, we introduce a new algorithm to compute uniform bounds
for the initial partitioning of a hybrid mesh.

We executed extensive tests to demonstrate the scalability and efficiency of
our design.
The core algorithms of \tetcode maintain the same favorable performance
characteristics for pyramid elements as they do for the existing hexahedral,
tetrahedral, and prismatic elements.
We further demonstrate this capability through the parallel efficiency of
\tetcode on a large-scale hybrid mesh (over 100,000 trees), utilizing up to
131,072 processes.
In summary, this work establishes an efficient and scalable foundation for
hybrid adaptive mesh refinement (AMR).

In future works, we plan to simplify the existing Morton-type SFC implementations
across all standard element shapes, through a unified theory that can
potentially yield further performance improvements.
Second, we intend to integrate this hybrid tree-based AMR infrastructure
into common numerical solver libraries to provide a tangible improvement
with respect to indexing, data locality, and parallel efficiency.

\section*{Acknowledgements}

We acknowledge funding by the DFG under grant no.\ 467255783 (``Hybride
AMR-Simulationen'').
We acknowledge travel funds by the Hausdorff Center for Mathematics (HCM) at
the University of Bonn.
The HCM is funded by the German Research Foundation (DFG) under Germany’s
excellence initiative EXC 59 -- 241002279 (``Mathematics: Foundations,
Models, Applications'').

The author Lukas Dreyer has been supported by the Deutsche Forschungsgemeinschaft (DFG) under Germany`s Excellence Strategy within the Cluster of Excellence PhoenixD (EXC 2122, Project ID 390833453).

We chose equal-weighted scientific color maps \cite{Crameri23}, and employed
Gemini Thinking 3 Pro to polish the general style of the paper.

\bibliographystyle{siamplain}
\bibliography{bibtex/ccgo_new, bibtex/carsten}
\bibliography{refs}

\ifarXiv
    \foreach \x in {1,...,\numbersupplementpages}
    {
        \clearpage
        \includepdf[pages={\x,{}}]{\supplementfilename}
    }
\fi


\end{document}